\documentclass[pra,twocolumn,amsmath,18pt]{revtex4-1}
\usepackage{pifont}
\usepackage{graphicx,epsfig,subfigure,dsfont,amssymb,amsmath,amsthm,amsfonts,amsbsy,mathrsfs,amscd}
\usepackage[all]{xy}

\newtheorem{theorem}{Theorem}

\newtheorem{lemma}{Lemma}

\input amssym.def

\begin{document}

\title{Tighter constraints of multiqubit entanglement}

\smallskip
\author{Long-Mei Yang$^1$ }
\author{Bin Chen$^2$}
\thanks{Corresponding author: chenbin5134@163.com}
\author{Shao-Ming Fei$^{1,3}$}
\author{Zhi-Xi Wang$^1$}
\thanks{Corresponding author: wangzhx@cnu.edu.cn}
\affiliation{
$^1$School of Mathematical Sciences, Capital Normal University, Beijing 100048, China\\
$^2$College of Mathematical Science, Tianjin Normal University, Tianjin 300387, China\\
$^3$Max-Planck-Institute for Mathematics in the Sciences, 04103 Leipzig, Germany
}

\begin{abstract}
Monogamy and polygamy relations characterize the distributions of entanglement in multipartite systems.
We provide classes of monogamy and polygamy inequalities of multiqubit entanglement
in terms of concurrence, entanglement of formation, negativity, Tsallis-$q$ entanglement and R\'{e}nyi-$\alpha$ entanglement, respectively.
We show that these inequalities are tighter than the existing ones for some classes of quantum states.
\end{abstract}

\maketitle
\baselineskip20pt
\section{Introduction}

Quantum entanglement is an essential feature of quantum mechanics which distinguishes the quantum from the classical world
and plays a very important role in quantum information processing \cite{dynamics,GaoYang,demonstration,QIp}.
One singular property of quantum entanglement is that a quantum system entangled with one of the other subsystems limits
its entanglement with the remaining ones, known as the monogamy of entanglement (MoE) \cite{BMT,JSK}.
MoE plays a key role in  many quantum information and communication processing tasks such as the security proof in quantum cryptographic scheme \cite{CHB}
and the security analysis of quantum key distribution \cite{Pawl}.

For a tripartite quantum state $\rho_{ABC}$, MoE can be described as the following inequality
\begin{equation}
\mathcal{E}(\rho_{A|BC})\geq \mathcal{E}(\rho_{AB})+\mathcal{E}(\rho_{AC}),
\end{equation}
where $\rho_{AB}={\rm tr}_C(\rho_{ABC})$ and $\rho_{AC}={\rm tr}_B(\rho_{ABC})$ are reduced density matrices, and $\mathcal{E}$ is an entanglement measure.
However, it has been shown that not all entanglement measures satisfy such monogamy relations.
It has been shown that the squared concurrence $\mathcal{C}^2$ \cite{TJO,Bai}, the squared entanglement of formation (EoF) $E^2$ \cite{Oliveira} and the squared convex-roof extended negativity (CREN) $\mathcal{N}_c^2$ \cite{JSK8,Feng1} satisfy the monogamy relations for multiqubit states.

Another important concept is the assisted entanglement, which is a dual amount to bipartite entanglement measure.
It has a dually monogamous property in multipartite quantum systems and gives rise to polygamy relations.
For a tripartite state $\rho_{ABC}$, the usual polygamy relation is of the form,
\begin{equation}
\mathcal{E}^a(\rho_{A|BC})\leq\mathcal{E}^a (\rho_{AB})+\mathcal{E}^a(\rho_{AC}),
\end{equation}
where $\mathcal{E}^a$ is the corresponding entanglement measure of assistance associated to $\mathcal{E}$.
Such polygamy inequality has been deeply investigated in recent years,
and was generalized to multiqubit systems and classes of higher-dimensional quantum systems \cite{JSK3,F.B,G.G1,G.G2,JSK5,JSK6,JSK8,Song}.

Recently, generalized classes of monogamy inequalities related to the $\beta$th power of entanglement measures were proposed.
In Ref. \cite{SM.Fei1,SM.Fei3}, the authors proved that the squared concurrence and CREN satisfy the monogamy inequalities in multiqubit systems for $\beta\geq2$.
It has also been shown that the EoF satisfies monogamy relations when $\beta\geq\sqrt{2}$ \cite{SM.Fei1,SM.Fei2,SM.Fei3}.
Besides, the Tsallis-$q$ entanglement and R\'enyi-$\alpha$ entanglement satisfy monogamy relations when $\beta\geq1$ \cite{JSK2,JSK3,SM.Fei2,SM.Fei3}
for some cases.
Moreover, the corresponding polygamy relations have also been established \cite{G.G1,G.G2,JSK5,JSK9,Song,Yongming}.

In this paper, we investigate monogamy relations and polygamy relations in multiqubit systems.
We provide tighter constraints of multiqubit entanglement than all the existing ones,
thus give rise to finer characterizations of the entanglement distributions among the multiqubit systems.

\section{Tighter constraints related to concurrence}

We first consider the monogamy inequalities and polygamy inequalities for concurrence.
For a bipartite pure state $|\psi\rangle_{AB}$ in Hilbert space $H_A\otimes H_B$, the concurrence is defined as \cite{fide,SM.Fei4}
$\mathcal{C}(|\psi\rangle_{AB})=\sqrt{2(1-{\rm tr}\rho_A^2)}$ with $\rho_A={\rm tr}_B|\psi\rangle_{AB}\langle\psi|$.
The concurrence for a bipartite mixed state $\rho_{AB}$ is defined by the convex roof extension,
$\mathcal{C}(\rho_{AB})=\min\limits_{\{p_i,|\psi_i\rangle\}}\sum\limits_{i}p_i\mathcal{C}(|\psi_i\rangle)$,
where the minimum is taken over all possible decompositions of $\rho_{AB}=\sum\limits_{i}p_i|\psi_i\rangle\langle\psi_i|$
with $\sum p_i=1$ and $p_i\geq0$.
For an $N$-qubit state $\rho_{AB_1\cdots B_{N-1}}\in H_A\otimes H_{B_1}\otimes\cdots\otimes H_{B_{N-1}}$,
the concurrence $\mathcal{C}(\rho_{A|B_1\cdots B_{N-1}})$ of the state $\rho_{AB_1\cdots B_{N-1}}$ under bipartite partition $A$ and $B_1\cdots B_{N-1}$ satisfies \cite{SM.Fei1}
\begin{equation}\label{Con1}
\begin{array}{rl}
&\mathcal{C}^\beta(\rho_{A|B_1\cdots B_{N-1}})\\
&\ \ \geq\mathcal{C}^\beta(\rho_{AB_1})+\mathcal{C}^\beta(\rho_{AB_2})+\cdots+\mathcal{C}^\beta(\rho_{AB_{N-1}}),
\end{array}
\end{equation}
for $\beta\geq2$, where $\rho_{AB_j}$ denote two-qubit reduced density matrices of subsystems $AB_j$ for $j=1,2,\ldots,N-1$.
Later, the relation \eqref{Con1} is improved for the case $\beta\geq2$ \cite{SM.Fei2} as
\begin{equation}\label{Con2}
\begin{array}{rl}
&\mathcal{C}^\beta(\rho_{A|B_1\cdots B_{N-1}})\\
&\ \ \geq\mathcal{C}^\beta(\rho_{AB_1})+\frac{\beta}{2}\mathcal{C}^\beta(\rho_{AB_2})+\cdots\\
&\ \ \ \ +\big(\frac{\beta}{2}\big)^{m-1}\mathcal{C}^\beta(\rho_{AB_{m}})\\
&\ \ \ \ +\big(\frac{\beta}{2}\big)^{m+1}[\mathcal{C}^\beta(\rho_{AB_{m+1}})+\cdots+\mathcal{C}^\beta(\rho_{AB_{N-2}})]\\
&\ \ \ \ +\big(\frac{\beta}{2}\big)^m\mathcal{C}^\beta(\rho_{AB_{N-1}})
\end{array}
\end{equation}
conditioned that $\mathcal{C}(\rho_{AB_i})\geq\mathcal{C}(\rho_{A|B_{i+1}\cdots B_{N-1}})$ for $i=1,2,\ldots,m$,
and $\mathcal{C}(\rho_{AB_j})\leq\mathcal{C}(\rho_{A|B_{j+1}\cdots B_{N-1}})$ for $j=m+1,\ldots,N-2$.
The relation \eqref{Con2} is further improved for $\beta\geq2$ as \cite{SM.Fei3}
\begin{equation}\label{Con3}
\begin{array}{rl}
&\mathcal{C}^\beta(\rho_{A|B_1\cdots B_{N-1}})\\
&\ \ \geq\mathcal{C}^\beta(\rho_{AB_1})+\big(2^{\frac{\beta}{2}}-1\big)\mathcal{C}^\beta(\rho_{AB_2})+\cdots\\
&\ \ \ \ +\big(2^{\frac{\beta}{2}}-1\big)^{m-1}\mathcal{C}^\beta(\rho_{AB_{m}})\\
&\ \ \ \ +\big(2^{\frac{\beta}{2}}-1\big)^{m+1}[\mathcal{C}^\beta(\rho_{AB_{m+1}})+\cdots+\mathcal{C}^\beta(\rho_{AB_{N-2}})]\\
&\ \ \ \ +\big(2^{\frac{\beta}{2}}-1\big)^m\mathcal{C}^\beta(\rho_{AB_{N-1}})
\end{array}
\end{equation}
with the same conditions as in \eqref{Con2}.

For a tripartite state $|\psi\rangle_{ABC}$, the concurrence of assistance (CoA) is defined by \cite{EOS,monogamy}
\begin{equation}
\mathcal{C}_a(\rho_{AB})=\max\limits_{\{p_i,|\psi_i\rangle\}}\sum\limits_{i}p_i\mathcal{C}(|\psi_i\rangle),
\end{equation}
where the maximun is taken over all possible pure state decompositions of $\rho_{AB}$, and $\mathcal{C}(|\psi\rangle_{AB})=\mathcal{C}_a(|\psi\rangle_{AB})$.
The generalized polygamy relation based on the concurrence of assistance was established in \cite{G.G1,G.G2}
\begin{equation}\label{Con14}
\begin{array}{rl}
&\mathcal{C}^2(|\psi\rangle_{A|B_1\cdots B_{N-1}})\\
&\ \ =\mathcal{C}_a^2(|\psi\rangle_{A|B_1\cdots B_{N-1}})\\
&\ \ \leq\mathcal{C}_a^2(\rho_{AB_1})+\mathcal{C}_a^2(\rho_{AB_2})+\cdots+\mathcal{C}_a^2(\rho_{AB_{N-1}}).
\end{array}
\end{equation}

These monogamy and polygamy relations for concurrence can be further tightened under some conditions.
To this end, we first introduce the following lemma.
\begin{lemma}\label{con1}
Suppose that $k$ is a real number satisfying $0< k\leq1$, then for any $0\leq t\leq k$ and non-negative real numbers $m,n$, we have
\begin{equation}\label{Con4}
(1+t)^m\geq1+\frac{(1+k)^m-1}{k^m}t^m
\end{equation}
for $m\geq1$, and
\begin{equation}\label{Con5}
(1+t)^n\leq1+\frac{(1+k)^n-1}{k^n}t^n
\end{equation}
for $0\leq n\leq 1$.
\end{lemma}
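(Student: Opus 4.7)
The plan is to reduce both inequalities to a single monotonicity claim about a one-variable auxiliary function. First I would dispose of the endpoint $t=0$, where (reading $0^{\alpha}=0$ for $\alpha>0$ and handling $n=0$ by continuity) both \eqref{Con4} and \eqref{Con5} collapse to the trivial identity $1\ge 1$; only the range $0<t\le k$ requires real work. On that range, dividing through by $t^{m}$, respectively $t^{n}$, rewrites \eqref{Con4} as
\[
\varphi_{m}(t):=\frac{(1+t)^{m}-1}{t^{m}}\ \ge\ \varphi_{m}(k),
\]
and \eqref{Con5} as $\varphi_{n}(t)\le\varphi_{n}(k)$. Thus the whole lemma follows once I know that $\varphi_{\alpha}$ is non-increasing in $t$ when $\alpha\ge 1$ and non-decreasing in $t$ when $0\le\alpha\le 1$.

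The central step is the substitution $y=1/t$, under which $\varphi_{\alpha}$ takes the very clean form
\[
\varphi_{\alpha}(t)=(1+y)^{\alpha}-y^{\alpha}.
\]
Differentiating in $y$ gives $\alpha\bigl[(1+y)^{\alpha-1}-y^{\alpha-1}\bigr]$; since $1+y>y>0$, this bracket has the same sign as $\alpha-1$. Hence $\varphi_{\alpha}$ is non-decreasing in $y$, and therefore non-increasing in $t$, when $\alpha\ge 1$, and non-increasing in $y$ (hence non-decreasing in $t$) when $0\le\alpha\le 1$. Applying this with $\alpha=m$ and $\alpha=n$, together with the hypothesis $t\le k$, yields \eqref{Con4} and \eqref{Con5} respectively.

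I do not foresee any substantial obstacle: the content of the lemma is essentially the observation that $[(1+t)^{\alpha}-1]/t^{\alpha}$ is monotone in $t$ with a direction that flips at $\alpha=1$, and the change of variable $y=1/t$ reduces the check to a one-line calculus. The only care needed is in the degenerate cases $t=0$ and the endpoint exponents $m=1$, $n=0$, $n=1$, where the inequalities either hold with equality or collapse trivially. I note that the stated bound $k\le 1$ is not actually used in the argument above and is presumably retained only because the subsequent theorems will invoke the lemma with $t$ and $k$ being ratios of concurrences that are automatically bounded by $1$.
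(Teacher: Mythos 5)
Your proof is correct and follows essentially the same route as the paper: after the substitution $y=1/t$, your monotonicity claim for $(1+y)^{\alpha}-y^{\alpha}$ is exactly the paper's analysis of $f(m,x)=(1+x)^m-x^m$ being increasing for $m\geq1$ (decreasing for $0\leq n\leq1$) on $x\geq 1/k$, evaluated at $x=1/k$. Your extra remarks on the endpoint cases and on $k\leq1$ being unnecessary are accurate but do not change the argument.
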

\begin{proof}
We first consider the function $f(m,x)=(1+x)^m-x^m$ with $x\geq\frac{1}{k}$ and $m\geq1$.
Then $f(m,x)$ is an increasing function of $x$, since $\frac{\partial f(m,x)}{\partial x}=m[(1+x)^{m-1}-x^{m-1}]\geq 0$.
Thus,
\begin{equation}\label{Con8}
f(m,x)\geq f(m,\frac{1}{k})=\big(1+\frac{1}{k}\big)^m-\big(\frac{1}{k}\big)^m=\frac{(k+1)^m-1}{k^m}.
\end{equation}
Set $x=\frac{1}{t}$ in \eqref{Con8}, we get the inequality \eqref{Con4}.

Similar to the proof of inequality \eqref{Con4}, we can obtain the inequality \eqref{Con5},
since in this case $f(n,x)$ is a decreasing function of $x$ for $x\geq \frac{1}{k}$ and $0\leq n\leq1$.
\end{proof}

In the next, we denote $\mathcal{C}_{AB_i}=\mathcal{C}(\rho_{AB_i})$ the concurrence of $\rho_{AB_i}$
and $\mathcal{C}_{A|B_1\cdots B_{N-1}}=\mathcal{C}(\rho_{A|B_1\cdots B_{N-1}})$ for convenience.

\begin{lemma}\label{con2}
Suppose that $k$ is a real number satisfying $0< k\leq1$.
Then for any $2\otimes2\otimes2^{n-2}$ mixed state $\rho\in H_A\otimes H_B\otimes H_C$,
if $\mathcal{C}_{AC}^2\leq k\mathcal{C}_{AB}^2$, we have
\begin{equation}\label{Con9}
\mathcal{C}_{A|BC}^\beta\geq\mathcal{C}_{AB}^\beta+\frac{(1+k)^{\frac{\beta}{2}}-1}{k^{\frac{\beta}{2}}}\mathcal{C}_{AC}^\beta,
\end{equation}
for all $\beta\geq 2$.
\end{lemma}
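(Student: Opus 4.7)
The plan is to reduce the claim to the standard CKW-type inequality $\mathcal{C}_{A|BC}^{2}\geq \mathcal{C}_{AB}^{2}+\mathcal{C}_{AC}^{2}$, which is known to hold for any $2\otimes 2\otimes 2^{n-2}$ mixed state, and then apply Lemma~\ref{con1} to the resulting power-mean expression. Concretely, I would first raise both sides of the CKW inequality to the power $\beta/2$, so that
\begin{equation*}
\mathcal{C}_{A|BC}^{\beta}\geq\bigl(\mathcal{C}_{AB}^{2}+\mathcal{C}_{AC}^{2}\bigr)^{\beta/2},
\end{equation*}
which is legitimate since the function $x\mapsto x^{\beta/2}$ is monotone increasing on $[0,\infty)$ for $\beta\geq 2$.

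Next I would handle the trivial case $\mathcal{C}_{AB}=0$ separately: the hypothesis $\mathcal{C}_{AC}^{2}\leq k\,\mathcal{C}_{AB}^{2}$ forces $\mathcal{C}_{AC}=0$, so the right-hand side of \eqref{Con9} vanishes and the inequality holds. Assuming now $\mathcal{C}_{AB}>0$, I would factor $\mathcal{C}_{AB}^{\beta}$ out of the bound and introduce the quantity $t=\mathcal{C}_{AC}^{2}/\mathcal{C}_{AB}^{2}$, which by assumption satisfies $0\leq t\leq k\leq 1$. This rewrites the bound as
\begin{equation*}
\mathcal{C}_{A|BC}^{\beta}\geq \mathcal{C}_{AB}^{\beta}\,(1+t)^{\beta/2}.
\end{equation*}

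At this point I would invoke Lemma~\ref{con1}, inequality \eqref{Con4}, with $m=\beta/2\geq 1$ and the same $k$, giving
\begin{equation*}
(1+t)^{\beta/2}\geq 1+\frac{(1+k)^{\beta/2}-1}{k^{\beta/2}}\,t^{\beta/2}.
\end{equation*}
Multiplying through by $\mathcal{C}_{AB}^{\beta}$ and noting that $\mathcal{C}_{AB}^{\beta}\,t^{\beta/2}=\mathcal{C}_{AC}^{\beta}$ recovers exactly the claimed inequality~\eqref{Con9}.

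I do not foresee a genuine obstacle: once the CKW bound and Lemma~\ref{con1} are in hand, the argument is a clean substitution. The only points requiring a little care are (i) verifying that the scaled variable $t$ lies in the interval $[0,k]$ where Lemma~\ref{con1} is valid — this is guaranteed by the hypothesis $\mathcal{C}_{AC}^{2}\leq k\,\mathcal{C}_{AB}^{2}$ — and (ii) dealing with the degenerate case $\mathcal{C}_{AB}=0$, which is handled separately. These technicalities aside, the proof is essentially a direct chain of two inequalities.
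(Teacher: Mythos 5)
Your proposal is correct and follows essentially the same route as the paper's own proof: invoke the CKW-type inequality $\mathcal{C}_{A|BC}^2\geq\mathcal{C}_{AB}^2+\mathcal{C}_{AC}^2$ for $2\otimes2\otimes2^{n-2}$ states, factor out $\mathcal{C}_{AB}^{\beta}$, and apply inequality \eqref{Con4} of Lemma \ref{con1} with $t=\mathcal{C}_{AC}^2/\mathcal{C}_{AB}^2\leq k$, treating $\mathcal{C}_{AB}=0$ as a trivial separate case. No gaps; the argument matches the paper's.
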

\begin{proof}
Since $\mathcal{C}_{AC}^2\leq k\mathcal{C}_{AB}^2$ and $\mathcal{C}_{AB}>0$,
we obtain
\begin{equation}
\begin{array}{rl}
&\mathcal{C}_{A|BC}^\beta\geq(\mathcal{C}_{AB}^2+\mathcal{C}_{AC}^2)^{\frac{\beta}{2}}\\[1.5mm]
&\ \ \ \ \ \ \ \ \ =\mathcal{C}_{AB}^\beta \Big(1+\frac{\mathcal{C}_{AC}^2}{\mathcal{C}_{AB}^2}\Big)^{\frac{\beta}{2}}\\[1.5mm]
&\ \ \ \ \ \ \ \ \ \geq\mathcal{C}_{AB}^\beta \Big[1+\frac{(1+k)^{\frac{\beta}{2}}-1}{k^{\frac{\beta}{2}}}
\Big(\frac{\mathcal{C}_{AC}^2}{\mathcal{C}_{AB}^2}\Big)^{\frac{\beta}{2}}\Big]\\[3mm]
&\ \ \ \ \ \ \ \ \ =\mathcal{C}_{AB}^\beta+\frac{(1+k)^{\frac{\beta}{2}}-1}{k^{\frac{\beta}{2}}}\mathcal{C}_{AC}^\beta,
\end{array}
\end{equation}
where the first inequality is due to the fact, $\mathcal{C}_{A|BC}^2\geq\mathcal{C}_{AB}^2+\mathcal{C}_{AC}^2$ for
arbitrary $2\otimes2\otimes2^{n-2}$ tripartite state $\rho_{ABC}$ \cite{TJO,XJR} and
the second is due to Lemma \ref{con1}.
We can also see that if $\mathcal{C}_{AB}=0$, then $\mathcal{C}_{AC}=0$, and the lower bound becomes trivially zero.
\end{proof}

For multiqubit systems, we have the following Theorems.
\begin{theorem}\label{concurrence1}
Suppose $k$ is a real number satisfying $0<k\leq 1$.
For an $N$-qubit mixed state $\rho_{AB_1\cdots B_{N-1}}$, if $k\mathcal{C}_{AB_i}^2\geq\mathcal{C}_{A|B_{i+1}\cdots B_{N-1}}^2$
for $i=1,2,\ldots,m$, and $\mathcal{C}_{AB_j}^2\leq k\mathcal{C}_{A|B_{j+1}\cdots B_{N-1}}^2$ for $j=m+1,\ldots,N-2$,
$\forall 1\leq m\leq N-3$, $N\geq 4$, then we have
\begin{equation}\label{Con12}
\begin{array}{rl}
&\mathcal{C}^\beta_{A|B_1\cdots B_{N-1}}\\[2.0mm]
&\ \ \geq\mathcal{C}^\beta_{AB_1}+\frac{(1+k)^{\frac{\beta}{2}}-1}{k^{\frac{\beta}{2}}}\mathcal{C}^\beta_{AB_2}+\cdots\\[2.0mm]
&\ \ \ \ +\Big(\frac{(1+k)^{\frac{\beta}{2}}-1}{k^{\frac{\beta}{2}}}\Big)^{m-1}\mathcal{C}^\beta_{AB_{m}}\\[2.0mm]
&\ \ \ \ +\Big(\frac{(1+k)^{\frac{\beta}{2}}-1}{k^{\frac{\beta}{2}}}\Big)^{m+1}\Big(\mathcal{C}^\beta_{AB_{m+1}}+\cdots+
\mathcal{C}^\beta_{AB_{N-2}}\Big)\\[2.0mm]
&\ \ \ \ +\Big(\frac{(1+k)^{\frac{\beta}{2}}-1}{k^{\frac{\beta}{2}}}\Big)^m\mathcal{C}^\beta_{AB_{N-1}}
\end{array}
\end{equation}
for all $\beta\geq2$.
\end{theorem}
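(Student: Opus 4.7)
The plan is to iterate Lemma \ref{con2} along the bipartite cut $A|B_1\cdots B_{N-1}$, treating the index range $i=1,\ldots,m$ and the range $j=m+1,\ldots,N-2$ as two separate phases. For brevity write $\xi=\bigl[(1+k)^{\beta/2}-1\bigr]/k^{\beta/2}$ for the enhancement coefficient produced by Lemma \ref{con2}.

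In the first phase I would successively strip off one qubit at a time from the left. For $i=1$ the hypothesis $\mathcal{C}_{A|B_2\cdots B_{N-1}}^2\le k\,\mathcal{C}_{AB_1}^2$ is exactly what is required to apply Lemma \ref{con2} with $B=B_1$ and $C=B_2\cdots B_{N-1}$, producing $\mathcal{C}^\beta_{A|B_1\cdots B_{N-1}}\ge\mathcal{C}^\beta_{AB_1}+\xi\,\mathcal{C}^\beta_{A|B_2\cdots B_{N-1}}$. The analogous condition at the next index $i=2$ then lets me apply Lemma \ref{con2} again to $\mathcal{C}^\beta_{A|B_2\cdots B_{N-1}}$, and so on. Iterating through $i=1,\ldots,m$ telescopes to $\mathcal{C}^\beta_{A|B_1\cdots B_{N-1}}\ge\sum_{i=1}^{m}\xi^{\,i-1}\mathcal{C}^\beta_{AB_i}+\xi^{m}\,\mathcal{C}^\beta_{A|B_{m+1}\cdots B_{N-1}}$, exhibiting the first block of coefficients $1,\xi,\xi^{2},\ldots,\xi^{m-1}$ that appear in the statement and leaving a residual $\xi^{m}\mathcal{C}^\beta_{A|B_{m+1}\cdots B_{N-1}}$ to be bounded next.

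In the second phase the hypothesis reverses: for $j\ge m+1$ the smaller quantity is $\mathcal{C}_{AB_j}$, so I would apply Lemma \ref{con2} with the roles of $B$ and $C$ \emph{swapped}, taking $B=B_{j+1}\cdots B_{N-1}$ and $C=B_j$. This yields $\mathcal{C}^\beta_{A|B_j\cdots B_{N-1}}\ge\mathcal{C}^\beta_{A|B_{j+1}\cdots B_{N-1}}+\xi\,\mathcal{C}^\beta_{AB_j}$. Iterating down from $j=N-2$ to $j=m+1$ leaves $\mathcal{C}^\beta_{AB_{N-1}}$ sitting as the base term and pulls out a single factor of $\xi$ in front of each of $\mathcal{C}^\beta_{AB_{m+1}},\ldots,\mathcal{C}^\beta_{AB_{N-2}}$. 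Substituting this bound into the residual $\xi^{m}\mathcal{C}^\beta_{A|B_{m+1}\cdots B_{N-1}}$ from the previous step multiplies everything by $\xi^{m}$, producing the $\xi^{m}$ coefficient on $\mathcal{C}^\beta_{AB_{N-1}}$ and the $\xi^{m+1}$ coefficient on the middle block, exactly as in \eqref{Con12}.

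The accounting at the transition index $m$ is where I expect the main obstacle to lie: one must keep the two different orientations of Lemma \ref{con2} straight so that the base term of the second-phase telescoping ends up being $\mathcal{C}^\beta_{AB_{N-1}}$ with exponent $\xi^{m}$ rather than $\xi^{m+1}$. A secondary technical point is that several intermediate applications of Lemma \ref{con2} use a multi-qubit block as the $B$ or $C$ subsystem, which is legitimate because the underlying two-term monogamy $\mathcal{C}^2_{A|BC}\ge\mathcal{C}^2_{AB}+\mathcal{C}^2_{AC}$ extends from the $2\otimes 2\otimes 2^{n-2}$ case to coarse-grainings of multiqubit systems, via the same CKW-type inequality that underlies \eqref{Con1}.
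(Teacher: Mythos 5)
Your plan is correct and coincides with the paper's own proof: it telescopes Lemma \ref{con2} in exactly the same two phases, first stripping $B_1,\ldots,B_m$ to obtain $\mathcal{C}^\beta_{A|B_1\cdots B_{N-1}}\geq\sum_{i=1}^{m}\xi^{i-1}\mathcal{C}^\beta_{AB_i}+\xi^{m}\mathcal{C}^\beta_{A|B_{m+1}\cdots B_{N-1}}$, then applying the lemma with the roles of the two parts swapped for $j=m+1,\ldots,N-2$ to get $\mathcal{C}^\beta_{A|B_{m+1}\cdots B_{N-1}}\geq\xi\bigl(\mathcal{C}^\beta_{AB_{m+1}}+\cdots+\mathcal{C}^\beta_{AB_{N-2}}\bigr)+\mathcal{C}^\beta_{AB_{N-1}}$, and the coefficient bookkeeping you describe matches the statement.
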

\begin{proof}
From the inequality \eqref{Con9}, we have
\begin{equation}\label{Con10}
\begin{array}{rl}
&\mathcal{C}_{A|B_1B_2\cdots B_{N-1}}\\[2.0mm]
&\ \ \geq\mathcal{C}_{AB_1}^\beta+\frac{(1+k)^{\frac{\beta}{2}}-1}{k^{\frac{\beta}{2}}}\mathcal{C}_{A|B_2\cdots B_{N-1}}^\beta\\[2.0mm]
&\ \ \geq\mathcal{C}_{AB_1}^\beta+\frac{(1+k)^{\frac{\beta}{2}}-1}{k^{\frac{\beta}{2}}}\mathcal{C}_{AB_2}^\beta\\[2.0mm]
&\ \ \ \ +\Big(\frac{(1+k)^{\frac{\beta}{2}}-1}{k^{\frac{\beta}{2}}}\Big)^2\mathcal{C}_{A|B_3\cdots B_{N-1}}^\beta\\[2.0mm]
&\ \ \geq\cdots\\[2.0mm]
&\ \ \geq\mathcal{C}_{AB_1}^\beta+\Big(\frac{(1+k)^{\frac{\beta}{2}}-1}{k^{\frac{\beta}{2}}}\Big)\mathcal{C}_{AB_2}^\beta\\[2.0mm]
&\ \ \ \ +\cdots +\Big(\frac{(1+k)^{\frac{\beta}{2}}-1}{k^{\frac{\beta}{2}}}\Big)^{m-1}\mathcal{C}_{AB_m}^\beta\\[2.0mm]
&\ \ \ \ +\Big(\frac{(1+k)^{\frac{\beta}{2}}-1}{k^{\frac{\beta}{2}}}\Big)^{m}\mathcal{C}_{A|B_{m+1}\cdots B_{N-1}}^\beta.
\end{array}
\end{equation}
Since $\mathcal{C}_{AB_j}^2\leq k\mathcal{C}_{A|B_{j+1}\cdots B_{N-1}}^2$,
for $j=m+1,\ldots,N-2$, we get
\begin{equation}\label{Con11}
\begin{array}{rl}
&\mathcal{C}_{A|B_{m+1}\cdots B_{N-1}}^\beta\\[2.0mm]
&\ \ \geq\frac{(1+k)^{\frac{\beta}{2}}-1}{k^{\frac{\beta}{2}}}\mathcal{C}_{AB_{m+1}}^\beta+\mathcal{C}_{A|B_{m+2\cdots B_{N-1}}}^\beta\\[2.0mm]
&\ \ \geq\frac{(1+k)^{\frac{\beta}{2}}-1}{k^{\frac{\beta}{2}}}\Big(\mathcal{C}_{AB_{m+1}}^\beta+\cdots+\mathcal{C}_{AB_{N-2}}^\beta\Big)
+\mathcal{C}_{AB_{N-1}}^\beta.
\end{array}
\end{equation}
Combining \eqref{Con10} and \eqref{Con11}, we get the inequality \eqref{Con12}.
\end{proof}

If we replace the conditions $k\mathcal{C}_{AB_i}\geq\mathcal{C}_{A|B_{i+1}\cdots B_{N-1}}$ for $i=1,2,\ldots,m$,
and $\mathcal{C}_{AB_j}^2\leq k\mathcal{C}_{A|B_{j+1}\cdots B_{N-1}}^2$ for $j=m+1,\ldots,N-2$, $\forall 1\leq m\leq N-3$, $N\geq 4$,
in Theorem \ref{concurrence1} by $k\mathcal{C}_{AB_i}^2\geq \mathcal{C}_{A|B_{i+1}\cdots B_{N-1}}^2$ for $i=1,2,\ldots,N-2$,
then we have the following theorem.
\begin{theorem}\label{concurrence2}
Suppose $k$ is a real number satisfying $0<k\leq 1$.
For an $N$-qubit mixed state $\rho_{AB_1\cdots B_{N-1}}$,
if $k\mathcal{C}_{AB_i}^2\geq\mathcal{C}_{A|B_{i+1}\cdots B_{N-1}}^2$ for all $i=1,2,\ldots,N-2$, then we have
\begin{equation}\label{Con13}
\begin{array}{rcl}
\mathcal{C}^\beta_{A|B_1\cdots B_{N-1}}
&\geq&\mathcal{C}^\beta_{AB_1}+\frac{(1+k)^{\frac{\beta}{2}}-1}{k^{\frac{\beta}{2}}}
\mathcal{C}^\beta_{AB_2}+\cdots\\[3mm]
&&+\Big(\frac{(1+k)^{\frac{\beta}{2}}-1}{k^{\frac{\beta}{2}}}\Big)^{N-2}\mathcal{C}^\beta_{AB_{N-1}}
\end{array}
\end{equation}
for $\beta\geq2$.
\end{theorem}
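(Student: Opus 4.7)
The plan is to apply Lemma \ref{con2} iteratively along the chain $B_1, B_2, \ldots, B_{N-1}$, much as in the proof of Theorem \ref{concurrence1}, but now exploiting the uniform hypothesis $k\mathcal{C}_{AB_i}^2 \geq \mathcal{C}_{A|B_{i+1}\cdots B_{N-1}}^2$ valid for \emph{every} $i=1,\ldots,N-2$. Because the condition never reverses, there is no need to split the sum at an index $m$, so only the first half of the argument for Theorem \ref{concurrence1} is needed.

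Concretely, I would first invoke Lemma \ref{con2} to the tripartite cut $A|B_1|(B_2\cdots B_{N-1})$, regarding $B_2\cdots B_{N-1}$ as a single subsystem of dimension $2^{N-2}$. The hypothesis $k\mathcal{C}_{AB_1}^2 \geq \mathcal{C}_{A|B_2\cdots B_{N-1}}^2$ is exactly the inequality $\mathcal{C}_{AC}^2\leq k\mathcal{C}_{AB}^2$ required by Lemma \ref{con2} (with $B\leftrightarrow B_1$ and $C\leftrightarrow B_2\cdots B_{N-1}$), so the lemma yields
\begin{equation*}
\mathcal{C}^\beta_{A|B_1\cdots B_{N-1}} \geq \mathcal{C}^\beta_{AB_1} + \frac{(1+k)^{\frac{\beta}{2}}-1}{k^{\frac{\beta}{2}}}\,\mathcal{C}^\beta_{A|B_2\cdots B_{N-1}}.
\end{equation*}
Next I would repeat the same step on $\mathcal{C}^\beta_{A|B_2\cdots B_{N-1}}$, using the hypothesis at $i=2$, which introduces another factor of $\frac{(1+k)^{\beta/2}-1}{k^{\beta/2}}$ in front of $\mathcal{C}^\beta_{A|B_3\cdots B_{N-1}}$. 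Iterating this $N-2$ times peels off one qubit at each stage and accumulates the coefficients as successive powers of $\frac{(1+k)^{\beta/2}-1}{k^{\beta/2}}$.

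To make the bookkeeping clean I would set $K := \frac{(1+k)^{\beta/2}-1}{k^{\beta/2}}$ and prove by induction on $N$ the statement
\begin{equation*}
\mathcal{C}^\beta_{A|B_1\cdots B_{N-1}} \geq \sum_{i=1}^{N-2} K^{\,i-1}\mathcal{C}^\beta_{AB_i} + K^{N-2}\mathcal{C}^\beta_{AB_{N-1}},
\end{equation*}
with the base case $N=3$ being exactly Lemma \ref{con2}. The inductive step applies Lemma \ref{con2} once to split off $B_1$ and then uses the induction hypothesis on the $(N-1)$-qubit reduced state on $AB_2\cdots B_{N-1}$; the chain of hypotheses for indices $2,\ldots,N-2$ is precisely what is needed to invoke the inductive step. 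Telescoping the powers of $K$ gives \eqref{Con13}.

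I do not anticipate a serious obstacle: the lemma does all the analytic work, and the verification reduces to careful indexing. The one place that deserves attention is checking that Lemma \ref{con2} can legitimately be applied at each stage, i.e.\ that the effective ``$C$'' subsystem $B_{i+1}\cdots B_{N-1}$ (of dimension $2^{N-1-i}$) still fits into the $2\otimes 2\otimes 2^{n-2}$ framework of the lemma, and that $\mathcal{C}^2_{A|B_iB_{i+1}\cdots B_{N-1}} \geq \mathcal{C}^2_{AB_i} + \mathcal{C}^2_{A|B_{i+1}\cdots B_{N-1}}$ holds at every stage. Both are guaranteed by the same multiqubit monogamy fact of Ou--Fan cited in the proof of Lemma \ref{con2}.
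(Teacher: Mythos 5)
Your proposal is correct and follows essentially the same route the paper intends: Theorem \ref{concurrence2} is just the first half of the proof of Theorem \ref{concurrence1}, i.e.\ iterating inequality \eqref{Con9} of Lemma \ref{con2} down the chain $B_1,\ldots,B_{N-2}$ without any split at $m$, and your induction with $K=\frac{(1+k)^{\beta/2}-1}{k^{\beta/2}}$ reproduces exactly the coefficients in \eqref{Con13}. The only quibble is attributional: the monogamy fact $\mathcal{C}^2_{A|BC}\geq\mathcal{C}^2_{AB}+\mathcal{C}^2_{AC}$ used in Lemma \ref{con2} is cited to Osborne--Verstraete and Ren--Jiang \cite{TJO,XJR}, not Ou--Fan, which does not affect the argument.
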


It can be seen that the inequalities \eqref{Con12} and \eqref{Con13} are tighter than the ones given in Ref. \cite{SM.Fei3},
since
$$
\frac{(1+k)^\frac{\beta}{2}-1}{k^\frac{\beta}{2}}\geq2^{\frac{\beta}{2}}-1
$$
for $\beta\geq2$ and $0<k\leq1$.
The equality holds when $k=1$. Namely, the result (\ref{Con3}) given in \cite{SM.Fei3} are just special cases of ours for $k=1$.
As $\frac{(1+k)^\frac{\beta}{2}-1}{k^{\frac{\beta}{2}}}$ is a decreasing function with respect to $k$ for $0< k\leq 1$ and $\beta\geq2$,
we find that the smaller $k$ is, the tighter the inequalities \eqref{Con9}, \eqref{Con12} and \eqref{Con13} are.

\smallskip
\noindent{\bf Example 1} \, \ Consider the three-qubit state $|\psi\rangle_{ABC}$ in generalized Schmidt decomposition form \cite{Schmidt,SM.Fei5},
\begin{equation}\label{Con6}
|\psi\rangle_{ABC}=\lambda_0|000\rangle+\lambda_1e^{i\varphi}|100\rangle+\lambda_2|101\rangle+\lambda_3|110\rangle+\lambda_4|111\rangle,
\end{equation}
where $\lambda_i\geq0$, $i=,1,2...,4$, and $\sum\limits_{i=0}^4\lambda_i^2=1$.
Then we get $\mathcal{C}_{A|BC}=2\lambda_0\sqrt{\lambda_2^2+\lambda_3^2+\lambda_4^2}$, $\mathcal{C}_{AB}=2\lambda_0\lambda_2$ and
$\mathcal{C}_{AC}=2\lambda_0\lambda_3$.
Set $\lambda_0=\lambda_3=\frac{1}{2},\ \lambda_2=\frac{\sqrt{2}}{2}$ and $\lambda_1=\lambda_4=0$.
We have $\mathcal{C}_{A|BC}=\frac{\sqrt{3}}{2}$, $\mathcal{C}_{AB}=\frac{\sqrt{2}}{2}$ and $\mathcal{C}_{AC}=\frac{1}{2}$.
Then $\mathcal{C}_{AB}^\beta+\big(2^{\frac{\beta}{2}}-1\big)\mathcal{C}_{AC}^\beta=\big(\frac{\sqrt{2}}{2}\big)^\beta+
\big(2^{\frac{\beta}{2}}-1\big)\big(\frac{1}{2}\big)^\beta$ and
$\mathcal{C}_{AB}^\beta+\frac{(1+k)^\frac{\beta}{2}-1}{k^{\frac{\beta}{2}}}\mathcal{C}_{AC}^\beta=\big(\frac{\sqrt{2}}{2}\big)^\beta+
\frac{(1+k)^\frac{\beta}{2}-1}{k^{\frac{\beta}{2}}}\big(\frac{1}{2}\big)^\beta$.
One can see that our result is better than the result (\ref{Con3}) in \cite{SM.Fei3} for $\beta\geq2$, hence better than (\ref{Con1}) and (\ref{Con2}) given in \cite{SM.Fei1,SM.Fei2},
see Fig. 1.
\begin{figure}
  \centering
  \includegraphics[width=6cm]{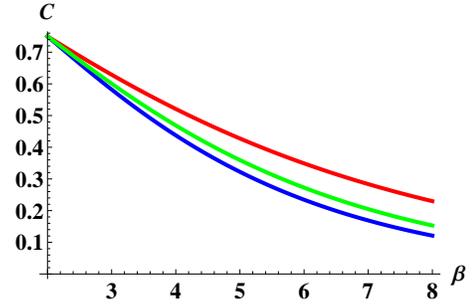}
  \caption{The $y$ axis is the lower bound of the concurrence $\mathcal{C}^\beta_{A|BC}$.
  The red (green) line represents the lower bound from our result for $k=0.6$ ($k=0.8$),
  and the blue line represents the lower bound of (\ref{Con3}) from \cite{SM.Fei3}.}
\end{figure}

We now discuss the polygamy relations for the CoA of $\mathcal{C}_a(|\psi\rangle_{A|B_1\cdots B_{N-1}})$ for $0\leq\beta\leq 2$.
We have the following Theorem.
\begin{theorem}
Suppose $k$ is a real number satisfying $0<k\leq 1$.
For an $N$-qubit pure state $|\psi\rangle_{AB_1\cdots B_{N-1}}$,
if $k\mathcal{C}^2_{aA|B_i}\geq\mathcal{C}^2_{aA|B_{i+1}\cdots B_{N-1}}$
for $i=1,2,\ldots,m$, and $\mathcal{C}^2_{aA|B_j}\leq k\mathcal{C}^2_{aA|B_{j+1}\cdots B_{N-1}}$ for $j=m+1,\ldots,N-2$,
$\forall 1\leq m\leq N-3$, $N\geq 4$, then we have
\begin{equation}\label{Con16}
\begin{array}{rl}
&\mathcal{C}_a^\beta(|\psi\rangle_{A|B_1\cdots B_{N-1}})\\[2.0mm]
&\ \ \leq\mathcal{C}^\beta_{aAB_1}+\frac{(1+k)^{\frac{\beta}{2}}-1}{k^{\frac{\beta}{2}}}\mathcal{C}^\beta_{aAB_2}+\cdots\\[2.0mm]
&\ \ \ \ +\Big(\frac{(1+k)^{\frac{\beta}{2}}-1}{k^{\frac{\beta}{2}}}\Big)^{m-1}\mathcal{C}^\beta_{aAB_{m}}\\[2.0mm]
&\ \ \ \ +\Big(\frac{(1+k)^{\frac{\beta}{2}}-1}{k^{\frac{\beta}{2}}}\Big)^{m+1}\Big(\mathcal{C}^\beta_{aAB_{m+1}}+
\cdots+\mathcal{C}^\beta_{aAB_{N-2}}\Big)\\[2.0mm]
&\ \ \ \ +\Big(\frac{(1+k)^{\frac{\beta}{2}}-1}{k^{\frac{\beta}{2}}}\Big)^m\mathcal{C}^\beta_{aAB_{N-1}}
\end{array}
\end{equation}
for all $0\leq\beta\leq2$.
\end{theorem}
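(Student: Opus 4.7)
The plan is to mirror the proof of Theorem~\ref{concurrence1} with two substitutions. First, replace the tripartite concurrence monogamy $\mathcal{C}_{A|BC}^2\geq\mathcal{C}_{AB}^2+\mathcal{C}_{AC}^2$ that drove Lemma~\ref{con2} with the tripartite CoA polygamy $\mathcal{C}_a^2(|\psi\rangle_{A|BC})\leq\mathcal{C}_{aAB}^2+\mathcal{C}_{aAC}^2$, which is the two-part case of (\ref{Con14}). Second, since the allowed range is now $0\leq\beta/2\leq 1$, invoke (\ref{Con5}) of Lemma~\ref{con1} in place of (\ref{Con4}). Both substitutions reverse every inequality in the chain, turning the monogamy lower bound into a polygamy upper bound.

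Concretely, I would first establish a CoA analogue of Lemma~\ref{con2}: for a pure state $|\psi\rangle_{A|BC}$ with $\mathcal{C}_{aAC}^2\leq k\mathcal{C}_{aAB}^2$ and $0\leq\beta\leq 2$,
\[
\mathcal{C}_a^\beta(|\psi\rangle_{A|BC})\leq\mathcal{C}_{aAB}^\beta+\frac{(1+k)^{\beta/2}-1}{k^{\beta/2}}\mathcal{C}_{aAC}^\beta.
\]
The derivation: start from (\ref{Con14}) applied to the tripartite split, raise both sides to the $\beta/2$ power (monotone since $\beta\geq 0$), factor out $\mathcal{C}_{aAB}^\beta$, and apply (\ref{Con5}) with $n=\beta/2$ and $t=\mathcal{C}_{aAC}^2/\mathcal{C}_{aAB}^2\leq k$. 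The trivial case $\mathcal{C}_{aAB}=0$ forces $\mathcal{C}_{aAC}=0$ and is handled separately exactly as in Lemma~\ref{con2}.

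Next, iterate this base lemma along the same pattern as (\ref{Con10})--(\ref{Con11}). Using the ``$B_i$ dominates'' conditions for $i=1,\ldots,m$, peel off $B_1,B_2,\ldots,B_m$ one at a time, accumulating $\bigl(\tfrac{(1+k)^{\beta/2}-1}{k^{\beta/2}}\bigr)^{i-1}$ in front of $\mathcal{C}_{aAB_i}^\beta$; after the $m$-th step the residual $\bigl(\tfrac{(1+k)^{\beta/2}-1}{k^{\beta/2}}\bigr)^{m}\mathcal{C}_{aA|B_{m+1}\cdots B_{N-1}}^\beta$ remains. For the tail the reversed conditions $\mathcal{C}_{aAB_j}^2\leq k\mathcal{C}_{aA|B_{j+1}\cdots B_{N-1}}^2$ for $j\geq m+1$ let me apply the base lemma with the roles of ``$B$'' and ``$C$'' swapped, producing one additional factor $\tfrac{(1+k)^{\beta/2}-1}{k^{\beta/2}}$ on each of $\mathcal{C}_{aAB_{m+1}}^\beta,\ldots,\mathcal{C}_{aAB_{N-2}}^\beta$ while $\mathcal{C}_{aAB_{N-1}}^\beta$ keeps only the accumulated $m$-th power. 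Adding the two halves yields (\ref{Con16}).

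The main obstacle is the iteration step itself, since the intermediate reduced states $\rho_{AB_{i+1}\cdots B_{N-1}}$ are generally mixed, whereas (\ref{Con14}) is stated for a pure state on the left-hand side. To close this gap I would invoke the multiqubit mixed-state CoA polygamy established in \cite{G.G1,G.G2}, which supplies the required $\mathcal{C}_a^2(\rho_{A|BC})\leq\mathcal{C}_a^2(\rho_{AB})+\mathcal{C}_a^2(\rho_{AC})$ at every stage of the recursion; alternatively one can pass to a purification and apply (\ref{Con14}) there. Once this input is in place, the rest is pure bookkeeping of the two geometric-style families of coefficients, and the argument follows that of Theorem~\ref{concurrence1} step for step.
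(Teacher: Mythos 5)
Your proposal is essentially the paper's own argument: the paper dispenses with this theorem in a single line, stating that the proof follows that of Theorem~\ref{concurrence1} with inequality \eqref{Con5} (together with the CoA polygamy relation \eqref{Con14}) used in place of \eqref{Con4}, which is exactly the substitution scheme you describe. Your extra care about the mixed intermediate reduced states in the iteration addresses a point the paper silently glosses over, but it does not change the route.
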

\begin{proof}
The proof is similar to the proof of Theorem \ref{concurrence1} by using inequality \eqref{Con5}.
\end{proof}

\begin{theorem}
Suppose $k$ is a real number satisfying $0<k\leq 1$.
For an $N$-qubit pure state $|\psi\rangle_{AB_1\cdots B_{N-1}}$,
if $k\mathcal{C}^2_{aAB_i}\geq\mathcal{C}^2_{aA|B_{i+1}\cdots B_{N-1}}$ for all $i=1,2,\ldots,N-2$, then we have
\begin{equation}\label{Con17}
\begin{array}{rcl}
\mathcal{C}^\beta_a{|\psi\rangle_{A|B_1\cdots B_{N-1}}}
&\leq&\mathcal{C}^\beta_{aAB_1}+\frac{(1+k)^{\frac{\beta}{2}}-1}
{k^{\frac{\beta}{2}}}\mathcal{C}^\beta_{aAB_2}+\cdots\\[3.0mm]
&&+\Big(\frac{(1+k)^{\frac{\beta}{2}}-1}{k^{\frac{\beta}{2}}}\Big)^{N-2}\mathcal{C}^\beta_{aAB_{N-1}}
\end{array}
\end{equation}
for $0\leq\beta\leq2$.
\end{theorem}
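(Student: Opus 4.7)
The plan is to follow the strategy that produced Theorem \ref{concurrence2}, but with two substitutions: the tripartite monogamy estimate $\mathcal{C}_{A|BC}^2\geq\mathcal{C}_{AB}^2+\mathcal{C}_{AC}^2$ is replaced by its CoA polygamy dual coming from \eqref{Con14}, and the convex branch \eqref{Con4} of Lemma \ref{con1} is replaced by the concave branch \eqref{Con5}. Since the hypothesis is the uniform condition $k\mathcal{C}_{aAB_i}^2\geq\mathcal{C}_{aA|B_{i+1}\cdots B_{N-1}}^2$ for every $i$, there is no crossover index to track and the structure mirrors Theorem \ref{concurrence2} rather than Theorem \ref{concurrence1}.

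First I would prove the tripartite CoA analogue of Lemma \ref{con2}: for a pure state $|\psi\rangle_{ABC}$ with $\mathcal{C}_{aAC}^2\leq k\mathcal{C}_{aAB}^2$, $0<k\leq 1$, and $0\leq\beta\leq 2$,
\begin{equation}\label{Con18}
\mathcal{C}_a^\beta(|\psi\rangle_{A|BC})\leq\mathcal{C}_{aAB}^\beta+\frac{(1+k)^{\beta/2}-1}{k^{\beta/2}}\,\mathcal{C}_{aAC}^\beta.
\end{equation}
To see this, specialize \eqref{Con14} to $N=3$, giving $\mathcal{C}_a^2(|\psi\rangle_{A|BC})\leq\mathcal{C}_{aAB}^2+\mathcal{C}_{aAC}^2$. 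Raising both sides to the power $\beta/2\in[0,1]$, factoring $\mathcal{C}_{aAB}^2$, and applying \eqref{Con5} with $n=\beta/2$ and $t=\mathcal{C}_{aAC}^2/\mathcal{C}_{aAB}^2\leq k$ yields \eqref{Con18}. The degenerate case $\mathcal{C}_{aAB}=0$ is handled as in Lemma \ref{con2}: the assumption forces $\mathcal{C}_{aAC}=0$ and the bound is trivial.

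Next I would iterate \eqref{Con18} along the chain $B_1,B_2,\ldots,B_{N-1}$. Reading the split $A|(B_1)(B_2\cdots B_{N-1})$ as a tripartite cut, the first hypothesis $k\mathcal{C}_{aAB_1}^2\geq\mathcal{C}_{aA|B_2\cdots B_{N-1}}^2$ allows \eqref{Con18} to give
\begin{equation*}
\mathcal{C}_a^\beta(|\psi\rangle_{A|B_1\cdots B_{N-1}})\leq\mathcal{C}_{aAB_1}^\beta+\frac{(1+k)^{\beta/2}-1}{k^{\beta/2}}\,\mathcal{C}_{aA|B_2\cdots B_{N-1}}^\beta.
\end{equation*}
Applying \eqref{Con18} again to $\mathcal{C}_{aA|B_2\cdots B_{N-1}}^\beta$ under the hypothesis $k\mathcal{C}_{aAB_2}^2\geq\mathcal{C}_{aA|B_3\cdots B_{N-1}}^2$ produces the $B_2$ term with coefficient $\bigl(\tfrac{(1+k)^{\beta/2}-1}{k^{\beta/2}}\bigr)^2$ and leaves a residual $\mathcal{C}_{aA|B_3\cdots B_{N-1}}^\beta$ carrying the same cumulative factor. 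Repeating $N-2$ times and using $\mathcal{C}_a(|\psi\rangle_{A|B_{N-1}})=\mathcal{C}_{aAB_{N-1}}$ at the last step delivers \eqref{Con17}.

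The main obstacle is really the tripartite CoA step \eqref{Con18}: once that is in place, the iteration is entirely parallel to Theorem \ref{concurrence2}. One has to be careful that the $n\in[0,1]$ branch of Lemma \ref{con1} is the correct one to invoke (the polygamy direction demands an upper bound on $(1+t)^{\beta/2}$, not a lower bound), and that one works on pure states throughout so that \eqref{Con14} applies at every reduction level $A|B_{i+1}\cdots B_{N-1}$. Beyond those bookkeeping points, no new inequality is needed.
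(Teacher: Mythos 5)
Your overall route coincides with the paper's: the paper proves the preceding CoA theorem only by remarking that it is ``similar to the proof of Theorem \ref{concurrence1} by using inequality \eqref{Con5}'', and states the present theorem with no further argument, so iterating a tripartite CoA polygamy step together with the concave branch of Lemma \ref{con1} is exactly what is intended, and your bookkeeping of the coefficients and of the final step $\mathcal{C}_a(|\psi\rangle_{A|B_{N-1}})=\mathcal{C}_{aAB_{N-1}}$ is fine.

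However, one supporting claim in your argument is wrong, and it conceals the only genuinely delicate point. You assert that ``one works on pure states throughout so that \eqref{Con14} applies at every reduction level'', but after the first step the states are mixed: $\rho_{AB_2\cdots B_{N-1}}=\tr_{B_1}|\psi\rangle\langle\psi|$ is not pure in general, and the quantities $\mathcal{C}_{aA|B_{i+1}\cdots B_{N-1}}$ appearing both in the hypotheses and in your iteration are CoA's of mixed reduced states. Your tripartite inequality was derived from the pure-state relation \eqref{Con14}, so it cannot be invoked verbatim at levels $i\geq 2$; what the iteration actually requires is a mixed-state CoA polygamy inequality $\mathcal{C}_a^2(\rho_{A|BC})\leq\mathcal{C}_a^2(\rho_{AB})+\mathcal{C}_a^2(\rho_{AC})$ for $2\otimes2\otimes2^{n}$ systems. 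This does not follow from the pure-state case merely by decomposing $\rho_{ABC}$ and using that $\mathcal{C}_a$ is superadditive under mixing, because $\sum_j p_j\sqrt{x_j^2+y_j^2}$ need not be bounded by $\sqrt{(\sum_j p_j x_j)^2+(\sum_j p_j y_j)^2}$. To be fair, the paper leans on the same unstated ingredient (its hypotheses are themselves phrased in terms of CoA of mixed reduced states), so your proposal matches the paper's level of rigor; but to make the proof complete you should either prove or cite the mixed-state dual inequality, noting that the alternative of applying \eqref{Con14} once to the pure global state and then iterating Lemma \ref{con1} on the resulting sum would require hypotheses on the partial sums $\sum_{j>i}\mathcal{C}^2_{aAB_j}$ rather than on $\mathcal{C}^2_{aA|B_{i+1}\cdots B_{N-1}}$, since CoA obeys no monogamy inequality that would let you pass from one to the other.
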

The inequalities \eqref{Con16} and \eqref{Con17} are also upper bounds of $\mathcal{C}(|\psi\rangle_{A|B_1\cdots B_{N-1}})$
for pure state $|\psi\rangle_{AB_1\cdots B_{N-1}}$
since $\mathcal{C}(|\psi\rangle_{A|B_1\cdots B_{N-1}})=\mathcal{C}_a(|\psi\rangle_{A|B_1\cdots B_{N-1}})$.

\section{Tighter constraints relate to EoF}
Let $H_A$ and $H_B$ be two Hilbert spaces with dimension $m$ and $n$ $(m\leq n)$, respectively.
Then the entanglement of formation (EoF) \cite{CHB1,CHB2} is defined as follows:
for a pure state $|\psi\rangle_{AB}\in H_A\otimes H_B$, the EoF is given by
\begin{equation}
E(|\psi\rangle_{AB})=\mathcal{S}(\rho_A),
\end{equation}
where $\rho_A={\rm Tr}_B(|\psi\rangle_{AB}\langle\psi|)$ and $\mathcal{S}(\rho)=-{\rm Tr}(\rho\log_2\rho)$.
For a bipartite mixed state $\rho_{AB}\in H_A\otimes H_B$, the EoF is given by
\begin{equation}
E(\rho_{AB})=\min\limits_{\{p_i,|\psi_i\rangle\}}\sum\limits_{i}p_iE(|\psi_i\rangle),
\end{equation}
with the minimum taking over all possible pure state decomposition of $\rho_{AB}$.

In Ref. \cite{EoF}, Wootters showed that $E(|\psi\rangle)=f(\mathcal{C}^2(|\psi\rangle))$ for $2\otimes m \ (m\geq2)$ pure state $|\psi\rangle$, and $E(\rho)=f(\mathcal{C}^2(\rho))$ for two-qubit mixed state $\rho$,
where $f(x)=H\big(\frac{1+\sqrt{1-x}}{2}\big)$ and $H(x)=-x\log_2x-(1-x)\log_2(1-x)$.
$f(x)$ is a monotonically increasing function for $0\leq x\leq1$, and satisfies the following relations:
\begin{equation}\label{EoF4}
f^{\sqrt{2}}(x^2+y^2)\geq f^{\sqrt{2}}(x^2)+f^{\sqrt{2}}(y^2),
\end{equation}
where $f^{\sqrt{2}}(x^2+y^2)=[f(x^2+y^2)]^{\sqrt{2}}$.

Although EoF does not satisfy the inequality $E_{AB}+E_{AC}\leq E_{A|BC}$ \cite{CKW},
the authors in \cite{BZYW} showed that EoF is a monotonic function satisfying
$E^2(\rho_{A|B_1B_2\cdots B_{N-1}})\geq \sum_{i=1}^{N-1}E^2(\rho_{AB_i})$.
For $N$-qubit systems, one has \cite{SM.Fei1}
\begin{equation}\label{EoF1}
E^\beta_{A|B_1B_2\cdots B_{N-1}}\geq E^\beta_{AB_1}+E^\beta_{AB_2}+\cdots+E^\beta_{AB_{N-1}},
\end{equation}
for $\beta\geq\sqrt{2}$, where $E_{A|B_1B_2\cdots B_{N-1}}$ is the EoF of $\rho$ under bipartite partition $A|B_1B_2\cdots B_{N-1}$,
and $E_{AB_i}$ is the EoF of the mixed state $\rho_{AB_i}={\rm Tr}_{B_1\cdots B_{i-1},B_{i+1}\cdots B_{N-1}}(\rho)$
for $i=1,2,\ldots,N-1$.
Recently, the authors in Ref. \cite{SM.Fei2} proposed a monogamy relation that is tighter than the inequality \eqref{EoF1},
\begin{equation}\label{EoF2}
\begin{array}{rl}
&E^\beta_{A|B_1B_2\cdots B_{N-1}}\\[1.5mm]
&\ \ \geq E^\beta_{AB_1}+\frac{\beta}{\sqrt{2}}E^\beta_{AB_2}+\cdots+\big(\frac{\beta}{\sqrt{2}}\big)^{m-1}E^\beta_{AB_m}\\[1.5mm]
&\ \ \ \ +\big(\frac{\beta}{\sqrt{2}}\big)^{m+1}\big(E^\beta_{AB_{m+1}}+\cdots +E^\beta_{AB_{N-2}}\big)\\[1.5mm]
&\ \ \ \ +\big(\frac{\beta}{\sqrt{2}}\big)^mE^\beta_{AB_{N-1}},
\end{array}
\end{equation}
if $\mathcal{C}_{AB_i}\geq\mathcal{C}_{A|B_{j+1}\cdots B_{N-1}}$ for $i=1,2,\ldots,m$, and $\mathcal{C}_{AB_j}\leq\mathcal{C}_{A|B_{j+1}\cdots B_{N-1}}$
for $j=m+1,\ldots,N-2$, $\forall 1\leq m\leq N-3$, $N\geq4$ for $\beta\geq\sqrt{2}$.
The inequality \eqref{EoF2} is also improved to
\begin{equation}\label{EoF9}
\begin{array}{rl}
&E^\beta_{A|B_1B_2\cdots B_{N-1}}\\[1.5mm]
&\ \ \geq E^\beta_{AB_1}+\Big(2^{\frac{\beta}{\sqrt{2}}}-1\Big)E^\beta_{AB_2}+\cdots+
\Big(2^{\frac{\beta}{\sqrt{2}}}-1\Big)^{m-1}\\[1.5mm]
&\ \ \ \ \times E^\beta_{AB_m}+\Big(2^{\frac{\beta}{\sqrt{2}}}-1\Big)^{m+1}\big(E^\beta_{AB_{m+1}}+\cdots+E^\beta_{AB_{N-2}}\big) \\[1.5mm]
&\ \ \ \ +\Big(2^{\frac{\beta}{\sqrt{2}}}-1\Big)^mE^\beta_{AB_{N-1}},
\end{array}
\end{equation}
under the same conditions as that of inequality \eqref{EoF2}.

In fact, these inequalities can be further improved to even tighter monogamy relations.
\begin{theorem}\label{eof1}
Suppose $k$ is a real number satisfying $0<k\leq 1$.
For any $N$-qubit mixed state $\rho_{AB_1\cdots B_{N-1}}$,
if $kE^{\sqrt{2}}_{AB_i}\geq E^{\sqrt{2}}_{A|B_{i+1}\cdots B_{N-1}}$ for $i=1,2,\ldots,m$, and
$E^{\sqrt{2}}_{AB_j}\leq kE^{\sqrt{2}}_{A|B_{j+1}\cdots B_{N-1}}$ for $j=m+1,\ldots,N-2$, $\forall 1\leq m\leq N-3$, $N\geq4$,
the entanglement of formation $E(\rho)$ satisfies
\begin{equation}\label{EoF3}
\begin{array}{rl}
&E^\beta_{A|B_1B_2\cdots B_{N-1}}\\[2.0mm]
&\ \ \geq E^\beta_{AB_1}+\frac{(1+k)^t-1}{k^t}E^\beta_{AB_2}+\cdots+\Big(\frac{(1+k)^t-1}{k^t}\Big)^{m-1}E^\beta_{AB_m}\\[2.0mm]
&\ \ \ \ +\Big(\frac{(1+k)^t-1}{k^t}\Big)^{m+1}(E^\beta_{AB_{m+1}}+\cdots+E^\beta_{AB_{N-2}})\\[2.0mm]
&\ \ \ \ +\Big(\frac{(1+k)^t-1}{k^t}\Big)^mE^\beta_{AB_{N-1}},
\end{array}
\end{equation}
for $\beta\geq\sqrt{2}$, where $t=\frac{\beta}{\sqrt{2}}$.
\end{theorem}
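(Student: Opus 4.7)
The plan is to mirror the proof of Theorem \ref{concurrence1} step by step, with concurrence replaced by EoF and the base two-party inequality built from property \eqref{EoF4} instead of the qubit concurrence monogamy. Setting $t=\beta/\sqrt{2}\geq 1$, I would proceed in three steps.

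First, I would establish an EoF analog of Lemma \ref{con2}: if $E^{\sqrt{2}}_{AC}\leq k\, E^{\sqrt{2}}_{AB}$, then
\[
E^\beta_{A|BC}\;\geq\; E^\beta_{AB}+\frac{(1+k)^{t}-1}{k^{t}}\,E^\beta_{AC}.
\]
The starting point is the $N=3$ case of inequality \eqref{EoF1} with $\beta=\sqrt{2}$, namely $E^{\sqrt{2}}_{A|BC}\geq E^{\sqrt{2}}_{AB}+E^{\sqrt{2}}_{AC}$, which follows from $E=f(\mathcal{C}^2)$, the monotonicity of $f$, the concurrence monogamy $\mathcal{C}^2_{A|BC}\geq \mathcal{C}^2_{AB}+\mathcal{C}^2_{AC}$, and the subadditivity relation \eqref{EoF4}. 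Raising both sides to the power $t\geq 1$, factoring out $(E^{\sqrt{2}}_{AB})^{t}=E^\beta_{AB}$, and applying inequality \eqref{Con4} of Lemma \ref{con1} (with $m\mapsto t$) to the ratio $E^{\sqrt{2}}_{AC}/E^{\sqrt{2}}_{AB}\in[0,k]$ produces the displayed two-party bound; the edge case $E_{AB}=0$ forces $E_{AC}=0$ by the same $N=3$ inequality, so the bound is trivially valid.

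Second, I would apply this two-party inequality iteratively to the first $m$ subsystems. At the $i$th step the hypothesis $kE^{\sqrt{2}}_{AB_i}\geq E^{\sqrt{2}}_{A|B_{i+1}\cdots B_{N-1}}$ supplies the required ratio condition for the pair $(E_{AB_i},\,E_{A|B_{i+1}\cdots B_{N-1}})$, and one term is stripped off per iteration, exactly as in \eqref{Con10}. After $m$ rounds this yields
\[
E^\beta_{A|B_1\cdots B_{N-1}}\geq \sum_{i=1}^{m}\Bigl(\tfrac{(1+k)^{t}-1}{k^{t}}\Bigr)^{i-1}E^\beta_{AB_i}+\Bigl(\tfrac{(1+k)^{t}-1}{k^{t}}\Bigr)^{m}E^\beta_{A|B_{m+1}\cdots B_{N-1}}.
\]
Third, for the indices $j=m+1,\ldots,N-2$ the opposite condition $E^{\sqrt{2}}_{AB_j}\leq kE^{\sqrt{2}}_{A|B_{j+1}\cdots B_{N-1}}$ holds, so the same two-party inequality is applied with $AB_j$ playing the role of the smaller term. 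Iterating down to $B_{N-1}$ in the pattern of \eqref{Con11} gives
\[
E^\beta_{A|B_{m+1}\cdots B_{N-1}}\geq \tfrac{(1+k)^{t}-1}{k^{t}}\bigl(E^\beta_{AB_{m+1}}+\cdots+E^\beta_{AB_{N-2}}\bigr)+E^\beta_{AB_{N-1}}.
\]
Substituting this back into the expression from the second step reproduces \eqref{EoF3}.

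The main obstacle is the very first step: justifying $E^{\sqrt{2}}_{A|BC}\geq E^{\sqrt{2}}_{AB}+E^{\sqrt{2}}_{AC}$ when $BC$ is a composite of several qubits, since the identity $E=f(\mathcal{C}^2)$ holds exactly only for $2\otimes m$ pure states and for two-qubit mixed states. Once this monogamy is imported from \cite{SM.Fei1} (where the multi-qubit extension is handled via convex-roof arguments layered on top of \eqref{EoF4}), the remainder is a purely algebraic transcription of the concurrence argument, with $\beta/2$ everywhere replaced by $t=\beta/\sqrt{2}$.
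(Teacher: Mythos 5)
There is a genuine gap, and it is exactly the step you flag and then wave away: the hierarchical mixed-state monogamy $E^{\sqrt{2}}_{A|B_iB_{i+1}\cdots B_{N-1}}\geq E^{\sqrt{2}}_{AB_i}+E^{\sqrt{2}}_{A|B_{i+1}\cdots B_{N-1}}$, in which the second term on the right is the EoF of a $2\otimes 2^{\,N-1-i}$ \emph{mixed} state, is not available to be ``imported from \cite{SM.Fei1}''. The result \eqref{EoF1} of \cite{SM.Fei1} bounds $E^{\beta}_{A|B_1\cdots B_{N-1}}$ only by a sum of \emph{two-qubit} terms $E^{\beta}_{AB_i}$, and the standard derivation cannot produce your composite-$C$ version: one has $E_{AB_i}=f(\mathcal{C}^2_{AB_i})$ only for two-qubit reduced states, while for the composite cut one only gets $E_{A|B_{i+1}\cdots B_{N-1}}\geq f(\mathcal{C}^2_{A|B_{i+1}\cdots B_{N-1}})$, which is the \emph{wrong} direction for a term that must sit on the right-hand side of a lower bound. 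So your step-one lemma, and hence the whole iteration carried out at the level of EoF itself, is unsupported; the obstacle you name is not a technicality but the point where the argument breaks.

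The paper's proof is structured precisely to avoid this. It never iterates on EoF: it uses $E_{A|B_1\cdots B_{N-1}}\geq f(\mathcal{C}^2_{A|B_1\cdots B_{N-1}})$ (inequality \eqref{EoF6}) exactly once for the full partition, then iterates entirely inside the argument of $f$, combining the concurrence monogamy $\mathcal{C}^2_{A|B_iB_{i+1}\cdots B_{N-1}}\geq\mathcal{C}^2_{AB_i}+\mathcal{C}^2_{A|B_{i+1}\cdots B_{N-1}}$ with the tightened superadditivity \eqref{EoF5} (which is \eqref{EoF4} sharpened by \eqref{Con4}), and only at the very end converts $f^{\beta}(\mathcal{C}^2_{AB_i})=E^{\beta}_{AB_i}$ for the two-qubit reduced states. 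The theorem's hypotheses still supply the ratio conditions needed for \eqref{EoF5}, since $kE^{\sqrt{2}}_{AB_i}\geq E^{\sqrt{2}}_{A|B_{i+1}\cdots B_{N-1}}\geq f^{\sqrt{2}}(\mathcal{C}^2_{A|B_{i+1}\cdots B_{N-1}})$. To repair your proposal you should restructure it along these lines (iterate on $f$ of squared concurrences, not on EoF), rather than positing a $\sqrt{2}$-power EoF monogamy with a composite subsystem that is neither proved in the paper nor in the cited references.
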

\begin{proof}
For $\beta\geq\sqrt{2}$ and $k f^{\sqrt{2}}(x^2)\geq f^{\sqrt{2}}(y^2)$, we find
\begin{equation}\label{EoF5}
\begin{array}{rl}
&f^{\beta}(x^2+y^2)=[f^{\sqrt{2}}(x^2+y^2)]^t\\[1.5mm]
& \ \ \ \ \ \ \ \ \ \ \ \ \ \ \ \ \ \geq[f^{\sqrt{2}}(x^2)+f^{\sqrt{2}}(y^2)]^t\\[1.5mm]
& \ \ \ \ \ \ \ \ \ \ \ \ \ \ \ \ \ \geq[f^{\sqrt{2}}(x^2)]^t+\frac{(1+k)^t-1}{k^t}[f^{\sqrt{2}}(y^2)]^t\\[1.5mm]
& \ \ \ \ \ \ \ \ \ \ \ \ \ \ \ \ \ =f^\beta(x^2)+\frac{(1+k)^t-1}{k^t}f^\beta(y^2),
\end{array}
\end{equation}
where the first inequality is due to the inequality \eqref{EoF4}, and the second inequality can be obtained from inequality \eqref{Con4}.

Let $\rho=\sum_ip_i|\psi_i\rangle\langle\psi_i|\in H_A\otimes H_{B_1}\otimes\cdots H_{B_{N-1}}$ be the optimal decomposition of
$E_{A|B_1B_2\cdots B_{N-1}}(\rho)$ for the $N$-qubit mixed state $\rho$. Then \cite{SM.Fei3}
\begin{equation}\label{EoF6}
E_{A|B_1B_2\cdots B_{N-1}}\geq f(\mathcal{C}^2_{A|B_1B_2\cdots B_{N-1}}).
\end{equation}
Thus,
\begin{equation}\label{EoF7}
\begin{array}{rl}
&E^\beta_{A|B_1B_2\cdots B_{N-1}}\\[2.0mm]
&\ \ \geq f^\beta(\mathcal{C}^2_{A|B_1B_2\cdots B_{N-1}})\\[2.0mm]
&\ \ \geq f^\beta(\mathcal{C}^2_{A|B_1})+\frac{(1+k)^t-1}{k^t}f^\beta(\mathcal{C}^2_{A|B_2})+\cdots\\[2.0mm]
&\ \ \ \ +\Big(\frac{(1+k)^t-1}{k^t}\Big)^{m-1}f^\beta(\mathcal{C}^2_{A|B_{m}})+\Big(\frac{(1+k)^t-1}{k^t}\Big)^{m+1}\\[2.0mm]
&\ \ \ \ \ [f^\beta(\mathcal{C}^2_{AB_{m+1}})+\cdots+f^\beta(\mathcal{C}^2_{AB_{N-2}})]\\[2.0mm]
&\ \ \ \ +\Big(\frac{(1+k)^t-1}{k^t}\Big)^m f^\beta(\mathcal{C}^2_{AB_{N-1}})\\[2.0mm]
&\ \ =E^\beta_{AB_1}+\Big(\frac{(1+k)^t-1}{k^t}\Big)E^\beta_{AB_2}+\cdots+\Big(\frac{(1+k)^t-1}{k^t}\Big)^{m-1}\\[2.0mm]
&\ \ \ \ \times E^{m-1}_{AB_m}+\Big(\frac{(1+k)^t-1}{k^t}\Big)^{m+1}(E^\beta_{AB_{m+1}}+\cdots+E^\beta_{AB_{N-2}})\\[2.0mm]
&\ \ \ \ +\Big(\frac{(1+k)^t-1}{k^t}\Big)^m E^\beta_{AB_{N-1}},
\end{array}
\end{equation}
where the first inequality holds due to \eqref{EoF6}, the second inequality is similar to the proof of Theorem \ref{concurrence1} by using inequality \eqref{EoF5},
and the last equality holds since for any $2\otimes 2$ quantum state $\rho_{AB_i}$, $E(\rho_{AB_i})=f[\mathcal{C}^2(\rho_{AB_i})]$.
\end{proof}

Similar to the case of concurrence, we have also the following tighter monogamy relation for EoF.

\begin{theorem}\label{eof2}
Suppose $k$ is a real number satisfying $0<k\leq 1$.
For an $N$-qubit mixed state $\rho_{AB_1\cdots B_{N-1}}$,
if $kE^{\sqrt{2}}_{AB_i}\geq E^{\sqrt{2}}_{A|B_{i+1}\cdots B_{N-1}}$ for all $i=1,2,\ldots,N-2$,
we have
\begin{equation}\label{EoF8}
\begin{array}{rl}
&E^\beta_{A|B_1B_2\cdots B_{N-1}}\geq E^\beta_{AB_1}+\frac{(1+k)^t-1}{k^t}E^\beta_{AB_2}+\cdots\\
& \ \ \ \ \ \ \ \ \ \ \ \ \ \ \ \ \ \ \ \ \ +\Big(\frac{(1+k)^t-1}{k^t}\Big)^{N-2}E^\beta_{AB_{N-1}},
\end{array}
\end{equation}
for $\beta\geq\sqrt{2}$ and $t=\frac{\beta}{\sqrt{2}}$.
\end{theorem}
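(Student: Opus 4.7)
The plan is to mirror the iteration used in the proof of Theorem \ref{concurrence2}, with Lemma \ref{con2} replaced by the inequality \eqref{EoF5} that was established inside the proof of Theorem \ref{eof1}. The starting point is \eqref{EoF6}, namely $E^\beta_{A|B_1\cdots B_{N-1}}\geq f^\beta(\mathcal{C}^2_{A|B_1\cdots B_{N-1}})$, together with the two-qubit identity $E(\rho_{AB_i})=f(\mathcal{C}^2(\rho_{AB_i}))$. The whole argument then unfolds on the level of $f^\beta$ applied to squared concurrences, with \eqref{EoF5} playing the role of a one-step peeling lemma, and the CKW-type bound $\mathcal{C}^2_{A|B_1\cdots B_{N-1}}\geq \mathcal{C}^2_{AB_1}+\mathcal{C}^2_{A|B_2\cdots B_{N-1}}$ combined with monotonicity of $f^\beta$ providing the splitting at each stage.

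At the first stage, to invoke \eqref{EoF5} I must translate the hypothesis $kE^{\sqrt{2}}_{AB_1}\geq E^{\sqrt{2}}_{A|B_2\cdots B_{N-1}}$ into a statement about $f^{\sqrt{2}}$ of concurrences. Since $E_{AB_1}=f(\mathcal{C}^2_{AB_1})$ for the two-qubit reduced state and $E_{A|B_2\cdots B_{N-1}}\geq f(\mathcal{C}^2_{A|B_2\cdots B_{N-1}})$ for the higher-dimensional bipartition, the assumption yields $kf^{\sqrt{2}}(\mathcal{C}^2_{AB_1})\geq f^{\sqrt{2}}(\mathcal{C}^2_{A|B_2\cdots B_{N-1}})$, which is exactly the hypothesis of \eqref{EoF5}. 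Applying it with $x^2=\mathcal{C}^2_{AB_1}$ and $y^2=\mathcal{C}^2_{A|B_2\cdots B_{N-1}}$ peels off the first term with coefficient $1$ and leaves $f^\beta(\mathcal{C}^2_{A|B_2\cdots B_{N-1}})$ weighted by $r:=\frac{(1+k)^t-1}{k^t}$. Iterating this step $N-2$ times, each stage splitting off the next $\mathcal{C}^2_{AB_i}$ and invoking \eqref{EoF5} under the $i$-th hypothesis, produces the geometric cascade of coefficients $1,r,r^2,\ldots,r^{N-2}$ in front of the $f^\beta(\mathcal{C}^2_{AB_i})$ terms. Re-expressing each $f^\beta(\mathcal{C}^2_{AB_i})$ as $E^\beta_{AB_i}$ via the two-qubit identity yields exactly \eqref{EoF8}.

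The main obstacle is the bookkeeping of the iterated coefficients together with the careful verification that the EoF-level hypothesis passes to the concurrence-level hypothesis required by \eqref{EoF5} at every stage; however, unlike Theorem \ref{eof1}, here all indices satisfy the same direction of inequality, so no splitting into two index ranges is needed and the induction proceeds uniformly. Once the translation step is accepted, the remainder is routine and follows the pattern of Theorem \ref{concurrence2} essentially verbatim.
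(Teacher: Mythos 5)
Your proposal is correct and matches the paper's intended argument: the paper proves Theorem \ref{eof2} by the same route as Theorem \ref{eof1} (start from \eqref{EoF6}, apply the concurrence monogamy inequality with monotonicity of $f$, peel terms via \eqref{EoF5}, and convert back with $E_{AB_i}=f(\mathcal{C}^2_{AB_i})$), only now with the single condition applied at every index so the cascade runs through all $N-2$ steps. Your explicit verification that $kE^{\sqrt{2}}_{AB_i}\geq E^{\sqrt{2}}_{A|B_{i+1}\cdots B_{N-1}}$ implies $kf^{\sqrt{2}}(\mathcal{C}^2_{AB_i})\geq f^{\sqrt{2}}(\mathcal{C}^2_{A|B_{i+1}\cdots B_{N-1}})$ via \eqref{EoF6} is exactly the step the paper leaves implicit, and it is sound.
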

As $\frac{(1+k)^t-1}{k^t}\geq2^t-1$ for $t\geq1$ and $0< k\leq 1$,
our new monogamy relations \eqref{EoF3} and \eqref{EoF8} are tighter than the ones given in \cite{SM.Fei1,SM.Fei2,SM.Fei3}.
Also, for $0< k\leq 1$ and $\beta\geq2$, the smaller $k$ is, the tighter inequalities \eqref{EoF3} and \eqref{EoF8} are.

\smallskip
\noindent{\bf Example 2} \, \ Let us again consider the three-qubit state $|\psi\rangle_{ABC}$ defined in \eqref{Con6}
with $\lambda_0=\lambda_3=\frac{1}{2},\ \lambda_2=\frac{\sqrt{2}}{2}$ and $\lambda_1=\lambda_4=0$.
Then $E_{A|BC}=2-\log_{2}3\approx0.811278$,
$E_{AB}=-\frac{2+\sqrt{2}}{4}\log_2\frac{2+\sqrt{2}}{4}-\frac{2-\sqrt{2}}{4}\log_2\frac{2-\sqrt{2}}{4}\approx0.600876$
and $E_{AB}=-\frac{2+\sqrt{3}}{4}\log_2\frac{2+\sqrt{3}}{4}-\frac{2-\sqrt{3}}{4}\log_2\frac{2-\sqrt{3}}{4}\approx0.354579$.
Thus, $E^\beta_{AB}+\big(2^{\frac{\beta}{2}}-1\big)E^\beta_{AC}=(0.600876)^\beta+\big(2^{\frac{\beta}{2}}-1\big)0.354579^\beta$,
$E^\beta_{AB}+\frac{1.5^{\frac{\beta}{2}}-1}{0.5^{\frac{\beta}{2}}}E^\beta_{AC}=
(0.600876)^\beta+\frac{1.5^{\frac{\beta}{2}}-1}{0.5^{\frac{\beta}{2}}}0.354579^\beta$
, $E^\beta_{AB}+\frac{1.7^{\frac{\beta}{2}}-1}{0.7^{\frac{\beta}{2}}}E^\beta_{AC}=
(0.600876)^\beta+\frac{1.7^{\frac{\beta}{2}}-1}{0.7^{\frac{\beta}{2}}}0.354579^\beta$
and $E^\beta_{AB}+\frac{1.9^{\frac{\beta}{2}}-1}{0.9^{\frac{\beta}{2}}}E^\beta_{AC}=
(0.600876)^\beta+\frac{1.9^{\frac{\beta}{2}}-1}{0.9^{\frac{\beta}{2}}}0.354579^\beta$.
One can see that our result is better than the one in \cite{SM.Fei3} for $\beta\geq\sqrt{2}$, hencee better than the ones in \cite{SM.Fei1,SM.Fei2}, see Fig. 2.
\begin{figure}
  \centering
  \includegraphics[width=6cm]{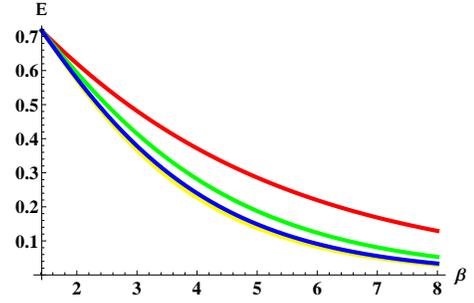}
  \caption{The $y$ axis is the lower bound of the EoF $E^\beta_{A|BC}$.
  The red (green resp. blue) line represents the lower bound from our result for $k=0.5$ ($k=0.7$ resp. $k=0.9$),
  and the yellow line represents the lower bound from the result in \cite{SM.Fei3}.}
\end{figure}

We can also provide tighter polygamy relations for the entanglement of assistance.
The entanglement of assistance (EoA) of $\rho_{AB}$ is defined as \cite{cohen},
\begin{equation}
E^a_{\rho_{AB}}=\max\limits_{\{p_i,|\psi_i\rangle\}}\sum\limits_ip_iE(|\psi_i\rangle),
\end{equation}
with the maximization taking over all possible pure decompositions of $\rho_{AB}$.
For any dimensional multipartite quantum state $\rho_{AB_1B_2\cdots B_{N-1}}$,
a general polygamy inequality of multipartite quantum entanglement was established as \cite{JSK5},
\begin{equation}
E^a(\rho_{A|B_1B_2\cdots B_{N-1}})\leq\sum\limits_{i=1}^{N-1}E^a(\rho_{A|B_i}).
\end{equation}
Using the same approach as for concurrence, we have the following Theorems.
\begin{theorem}
Suppose $k$ is a real number satisfying $0<k\leq 1$.
For any $N$-qubit mixed state $\rho_{AB_1\cdots B_{N-1}}$,
if $kE_{AB_i}^a\geq E_{A|B_{i+1}\cdots B_{N-1}}^a$ for $i=1,2,\ldots,m$, and
$E_{AB_j}^a\leq kE_{A|B_{j+1}\cdots B_{N-1}}^a$ for $j=m+1,\ldots,N-2$, $\forall 1\leq m\leq N-3$, $N\geq4$, we have
\begin{equation}
\begin{array}{rl}
&(E^a_{A|B_1B_2\cdots B_{N-1}})^\beta\\[2.0mm]
&\ \ \leq (E^a_{AB_1})^\beta+\frac{(1+k)^\beta-1}{k^\beta}(E^a_{AB_2})^\beta+\cdots\\[2.0mm]
&\ \ \ \ +\Big(\frac{(1+k)^\beta-1}{k^\beta}\Big)^{m-1}(E^a_{AB_m})^\beta\\[2.0mm]
&\ \ \ \ +\Big(\frac{(1+k)^\beta-1}{k^\beta}\Big)^{m+1}[(E^a_{AB_{m+1}})^\beta+\cdots+(E^a_{AB_{N-2}})^\beta]\\[2.0mm]
&\ \ \ \ +\Big(\frac{(1+k)^\beta-1}{k^\beta}\Big)^m(E^a_{AB_{N-1}})^\beta,
\end{array}
\end{equation}
for $0\leq\beta\leq1$.
\end{theorem}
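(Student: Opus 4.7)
The plan is to mirror the iterative strategy used for the concurrence polygamy and for the EoF monogamy of Theorem \ref{eof1}, but now driving the induction with the general polygamy inequality $E^a(\rho_{A|B_1\cdots B_{N-1}})\leq\sum_{i=1}^{N-1}E^a(\rho_{AB_i})$ from \cite{JSK5} and applying inequality \eqref{Con5} of Lemma \ref{con1} (rather than inequality \eqref{Con4}). First I would establish the tripartite building block: if $E^a_{AC}\le k E^a_{AB}$, then
$$(E^a_{A|BC})^\beta \le (E^a_{AB})^\beta + \frac{(1+k)^\beta-1}{k^\beta}(E^a_{AC})^\beta$$
for all $0\le\beta\le 1$. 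This follows by raising $E^a_{A|BC}\le E^a_{AB}+E^a_{AC}$ to the $\beta$-th power, factoring out $(E^a_{AB})^\beta$, setting $t=E^a_{AC}/E^a_{AB}\in[0,k]$, and invoking \eqref{Con5} with $n=\beta$; the degenerate case $E^a_{AB}=0$ forces $E^a_{AC}=0$ and both sides vanish.

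Next I would iterate this tripartite bound along the chain $B_1,B_2,\ldots,B_{N-1}$, exactly as in the proof of Theorem \ref{concurrence1}. For indices $i=1,\dots,m$ the hypothesis $kE^a_{AB_i}\ge E^a_{A|B_{i+1}\cdots B_{N-1}}$ says that $E^a_{AB_i}$ dominates, so each step gives
$$(E^a_{A|B_i\cdots B_{N-1}})^\beta \le (E^a_{AB_i})^\beta + \frac{(1+k)^\beta-1}{k^\beta}(E^a_{A|B_{i+1}\cdots B_{N-1}})^\beta.$$
Composing $m$ such steps produces the desired coefficients $1,\frac{(1+k)^\beta-1}{k^\beta},\ldots,\bigl(\frac{(1+k)^\beta-1}{k^\beta}\bigr)^{m-1}$ on $(E^a_{AB_1})^\beta,\ldots,(E^a_{AB_m})^\beta$, together with a residual tail $\bigl(\frac{(1+k)^\beta-1}{k^\beta}\bigr)^m(E^a_{A|B_{m+1}\cdots B_{N-1}})^\beta$.

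For $j=m+1,\dots,N-2$ the inequality $E^a_{AB_j}\le k E^a_{A|B_{j+1}\cdots B_{N-1}}$ reverses which term dominates, so I would factor out $(E^a_{A|B_{j+1}\cdots B_{N-1}})^\beta$ instead and apply \eqref{Con5} with $t=E^a_{AB_j}/E^a_{A|B_{j+1}\cdots B_{N-1}}\in[0,k]$ to obtain
$$(E^a_{A|B_j\cdots B_{N-1}})^\beta \le (E^a_{A|B_{j+1}\cdots B_{N-1}})^\beta + \frac{(1+k)^\beta-1}{k^\beta}(E^a_{AB_j})^\beta.$$
Iterating this from $j=m+1$ down to $j=N-2$ bounds the residual tail by $(E^a_{AB_{N-1}})^\beta + \frac{(1+k)^\beta-1}{k^\beta}\bigl[(E^a_{AB_{m+1}})^\beta+\cdots+(E^a_{AB_{N-2}})^\beta\bigr]$. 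Substituting this into the first-phase bound and distributing the accumulated factor $\bigl(\frac{(1+k)^\beta-1}{k^\beta}\bigr)^m$ reproduces exactly the stated coefficient pattern.

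The main obstacle is essentially bookkeeping: confirming that the polygamy inequality from \cite{JSK5} can be iterated in the binary cut form $E^a_{A|B_1\cdots B_{N-1}}\le E^a_{AB_1}+E^a_{A|B_2\cdots B_{N-1}}$, i.e., that $E^a$ between $A$ and the composite $B_{i+1}\cdots B_{N-1}$ obeys polygamy against the next qubit. Because the cited polygamy holds for arbitrary-dimensional multipartite states, this reduces to grouping the remaining qubits into a single subsystem and is immediate. Everything else is a direct transcription of the argument for Theorem \ref{concurrence1}, with the monogamy direction and inequality \eqref{Con4} replaced by the polygamy direction and inequality \eqref{Con5}; the exponent range shrinks from $\beta\ge 2$ to $0\le\beta\le 1$ because we now apply Lemma \ref{con1} with $n=\beta$ rather than with $m=\beta/2$.
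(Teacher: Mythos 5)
Your proposal is correct and follows essentially the same route the paper intends: the paper proves this theorem only by remarking that it goes ``using the same approach as for concurrence,'' i.e., seeding the iteration of Theorem~\ref{concurrence1} with the general polygamy inequality of \cite{JSK5} in its bipartite-cut form and replacing inequality \eqref{Con4} by inequality \eqref{Con5} of Lemma~\ref{con1} with $n=\beta\in[0,1]$, which is exactly what you do (including the correct handling of the two condition regimes and the degenerate case).
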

\begin{theorem}
Suppose $k$ is a real number satisfying $0<k\leq 1$.
For any $N$-qubit mixed state $\rho_{AB_1\cdots B_{N-1}}$,
if $kE_{AB_i}^a\geq E_{A|B_{i+1}\cdots B_{N-1}}^a$ for all $i=1,2,\ldots,N-2$,
we have
\begin{equation}
\begin{array}{rl}
&(E^a_{A|B_1B_2\cdots B_{N-1}})^\beta\leq (E^a_{AB_1})^\beta+\Big(\frac{(1+k)^\beta-1}{k^\beta}\Big)(E^a_{AB_2})^\beta\\[1.5mm]
&\ \ \ \ \ \ \ \ \ \ \ \ \ \ \ \ \ \ \ \ \ \ \ \ + \cdots\\[1.5mm]
& \ \ \ \ \ \ \ \ \ \ \ \ \ \ \ \ \ \ \ \ \ \ \ \ +\Big(\frac{(1+k)^\beta-1}{k^\beta}\Big)^{N-2}(E^a_{AB_{N-1}})^\beta,
\end{array}
\end{equation}
for $0\leq\beta\leq1$.
\end{theorem}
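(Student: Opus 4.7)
The plan is to mirror the proof of Theorem \ref{eof2}, but with two substitutions: (i) the monotone increasing inequality \eqref{Con4} of Lemma \ref{con1} is replaced by the monotone decreasing inequality \eqref{Con5}, since the exponent now lies in $[0,1]$; and (ii) the role played there by the monogamy of squared concurrence is taken over by the tripartite polygamy $E^a(\rho_{A|BC})\le E^a(\rho_{AB})+E^a(\rho_{AC})$, which is simply the $N=3$ specialization of the general polygamy of \cite{JSK5} quoted just before the theorem.

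First I would establish a tripartite auxiliary lemma: if $\rho_{ABC}$ is a multi-qubit state with $k E^a_{AB}\ge E^a_{AC}$ and $E^a_{AB}>0$, then for $0\le\beta\le 1$,
\[
(E^a_{A|BC})^\beta \le (E^a_{AB})^\beta + \frac{(1+k)^\beta-1}{k^\beta}\,(E^a_{AC})^\beta.
\]
To prove it I would start from $E^a_{A|BC}\le E^a_{AB}+E^a_{AC}$, raise to the power $\beta\in[0,1]$ (which preserves the inequality on non-negative numbers), factor out $(E^a_{AB})^\beta$, and apply \eqref{Con5} with $t=E^a_{AC}/E^a_{AB}\le k$ and $n=\beta$. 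The degenerate case $E^a_{AB}=0$ forces $E^a_{AC}=0$ by the hypothesis, so the inequality is trivially satisfied.

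Second, I would iterate the lemma along the chain. Using the hypothesis $k E^a_{AB_i}\ge E^a_{A|B_{i+1}\cdots B_{N-1}}$ at each step, applying the tripartite lemma with $B=B_i$ and $C=B_{i+1}\cdots B_{N-1}$ gives
\[
(E^a_{A|B_i\cdots B_{N-1}})^\beta \le (E^a_{AB_i})^\beta + \frac{(1+k)^\beta-1}{k^\beta}\,(E^a_{A|B_{i+1}\cdots B_{N-1}})^\beta,
\]
and recursive substitution for $i=1,2,\ldots,N-2$ produces the geometric coefficients $\bigl(\tfrac{(1+k)^\beta-1}{k^\beta}\bigr)^{i-1}$ in front of $(E^a_{AB_i})^\beta$, terminating in the factor $\bigl(\tfrac{(1+k)^\beta-1}{k^\beta}\bigr)^{N-2}(E^a_{AB_{N-1}})^\beta$, exactly as in the displayed inequality.

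The main obstacle is the tripartite step: one must be sure that the general polygamy of \cite{JSK5}, stated for an $N$-partite splitting into single parties, specializes cleanly to the coarse-grained tripartition $A\,|\,B_i\,|\,B_{i+1}\cdots B_{N-1}$. Since that coarse-graining is nothing but the $N=3$ instance of the cited inequality, no further work is needed. Once the tripartite lemma is in hand, the remainder is a straightforward chain computation strictly parallel to the proofs of Theorems \ref{concurrence2} and \ref{eof2}, with the inequality direction inverted and the Lemma \ref{con1} branch \eqref{Con5} used in place of \eqref{Con4}.
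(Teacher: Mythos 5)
Your proposal is correct and follows essentially the same route the paper intends: the paper gives no separate argument for this theorem, simply invoking "the same approach as for concurrence," i.e.\ iterating the tripartite specialization of the general polygamy inequality of \cite{JSK5} together with the branch \eqref{Con5} of Lemma \ref{con1}, exactly as you do. Your handling of the degenerate case $E^a_{AB}=0$ and of the coarse-grained tripartition $A|B_i|B_{i+1}\cdots B_{N-1}$ is consistent with (indeed slightly more careful than) the paper's treatment.
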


\section{Tighter constraints related to negativity}
The negativity, a well-known quantifier of bipartite entanglement, is defined as
$\mathcal{N}(\rho_{AB})=\big(\|\rho_{AB}^{T_A}\|-1\big)/2$ \cite{Vidal}, where $\rho_{AB}^{T_A}$ is the partial transposed matrix of $\rho_{AB}$ with respect to the subsystem $A$,
and $\|X\|$ denotes the trace norm of $X$, i.e., $\|X\|={\rm tr}\sqrt{XX^{\dag}}$.
For convenient, we use the definition of negativity as $\|\rho_{AB}^{T_A}\|-1$.
Particularly, for any bipartite pure state $|\psi\rangle_{AB}$,
$\mathcal{N}(|\psi\rangle_{AB})=2\sum\limits_{i<j}\sqrt{\lambda_i\lambda_j}=({\rm tr}\sqrt{\rho_A})^2-1$,
where $\lambda_i s$ are the eigenvalues of the reduced density matrix $\rho_A={\rm tr}_B|\psi\rangle_{AB}\langle\psi|$.
The convex-roof extended negativity (CREN) of a mixed state $\rho_{AB}$ is defined by
\begin{equation}
\mathcal{N}_c(\rho_{AB})=\min\limits_{\{p_i,|\psi_i\rangle\}}\sum_{i}p_i\mathcal{N}(|\psi_i\rangle),
\end{equation}
where the minimum is taken over all possible pure state decomposition of $\rho_{AB}$.
Thus $\mathcal{N}_c(\rho_{AB})=\mathcal{C}(\rho_{AB})$ for any two-qubit mixed state $\rho_{AB}$.
The dual to the CREN of a mixed state $\rho_{AB}$ is defined as
\begin{equation}
\mathcal{N}^a_c(\rho_{AB})=\max\limits_{\{p_i,|\psi_i\rangle\}}\sum_{i}p_i\mathcal{N}(|\psi_i\rangle),
\end{equation}
with the maximum taking over all possible pure state decomposition of $\rho_{AB}$.
Furthermore, $\mathcal{N}^a_c(\rho_{AB})=\mathcal{C}^a(\rho_{AB})$ for any two-qubit mixed state $\rho_{AB}$\cite{JSK8}.

Similar to the concurrence and EoF, we have the following Theorems.

\begin{theorem}
Suppose $k$ is a real number satisfying $0<k\leq 1$.
For any $N$-qubit mixed state $\rho_{AB_1\cdots B_{N-1}}$,
if $k\mathcal{N}_{cAB_i}^2\geq\mathcal{N}_{cA|B_{i+1}\cdots B_{N-1}}^2$
for $i=1,2,\ldots,m$, and $\mathcal{N}_{cAB_j}^2\leq k\mathcal{N}_{cA|B_{j+1}\cdots B_{N-1}}^2$ for $j=m+1,\ldots,N-2$,
$\forall 1\leq m\leq N-3$, $N\geq 4$, then we have
\begin{equation}\label{negativity1}
\begin{array}{rl}
&\mathcal{N}^\beta_{cA|B_1\cdots B_{N-1}}\\[2.0mm]
&\ \ \geq\mathcal{N}^\beta_{cAB_1}+\frac{(1+k)^{\frac{\beta}{2}}-1}{k^{\frac{\beta}{2}}}\mathcal{N}^\beta_{cAB_2}+\cdots\\[2.0mm]
&\ \ \ \ +\Big(\frac{(1+k)^{\frac{\beta}{2}}-1}{k^{\frac{\beta}{2}}}\Big)^{m-1}\mathcal{N}^\beta_{cAB_{m}}\\[2.0mm]
&\ \ \ \ +\Big(\frac{(1+k)^{\frac{\beta}{2}}-1}{k^{\frac{\beta}{2}}}\Big)^{m+1}\big(\mathcal{N}^\beta_{cAB_{m+1}}+
\cdots+\mathcal{N}^\beta_{cAB_{N-2}}\big)\\[2.0mm]
&\ \ \ \ +\Big(\frac{(1+k)^{\frac{\beta}{2}}-1}{k^{\frac{\beta}{2}}}\Big)^m\mathcal{N}^\beta_{cAB_{N-1}}
\end{array}
\end{equation}
for all $\beta\geq2$.
\end{theorem}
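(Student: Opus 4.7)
The plan is to mirror the proof of Theorem \ref{concurrence1} essentially verbatim, substituting the convex-roof extended negativity for the concurrence and invoking the CREN analog of the CKW inequality in place of Coffman--Kundu--Wootters for concurrence. The two essential ingredients are: (i) the tripartite bound $\mathcal{N}_{cA|BC}^2\geq\mathcal{N}_{cAB}^2+\mathcal{N}_{cAC}^2$ valid for any $2\otimes2\otimes 2^{n-2}$ state (this is established in Refs.\ \cite{JSK8,Feng1} and is the reason CREN is singled out over the bare negativity $\mathcal{N}$); and (ii) Lemma~\ref{con1}, which upgrades this squared bound to a weighted $\beta$-power bound whenever a ratio hypothesis on the two side terms is available.

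First, I would prove the CREN analog of Lemma~\ref{con2}: for a $2\otimes 2\otimes 2^{n-2}$ mixed state with $\mathcal{N}_{cAC}^2\leq k\mathcal{N}_{cAB}^2$,
\begin{equation*}
\mathcal{N}_{cA|BC}^\beta\geq\mathcal{N}_{cAB}^\beta+\frac{(1+k)^{\beta/2}-1}{k^{\beta/2}}\,\mathcal{N}_{cAC}^\beta\qquad(\beta\geq2).
\end{equation*}
The derivation is identical in form to Lemma~\ref{con2}: starting from $\mathcal{N}_{cA|BC}^\beta\geq(\mathcal{N}_{cAB}^2+\mathcal{N}_{cAC}^2)^{\beta/2}$, factor out $\mathcal{N}_{cAB}^\beta$ and apply \eqref{Con4} with $t=\mathcal{N}_{cAC}^2/\mathcal{N}_{cAB}^2\leq k$ and $m=\beta/2\geq1$. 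The degenerate case $\mathcal{N}_{cAB}=0$ forces $\mathcal{N}_{cAC}=0$ by the squared inequality, so the bound is trivially zero.

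Next, I would iterate this tripartite bound exactly as in the proof of Theorem~\ref{concurrence1}. The hypothesis $k\mathcal{N}_{cAB_i}^2\geq\mathcal{N}_{cA|B_{i+1}\cdots B_{N-1}}^2$ for $i=1,\dots,m$ lets me repeatedly split off the first subsystem on the left: at each step the tripartite bound applies to $(A,B_i,B_{i+1}\cdots B_{N-1})$ with the hypothesis taking the form $\mathcal{N}_{cA|B_{i+1}\cdots B_{N-1}}^2\leq k\mathcal{N}_{cAB_i}^2$, so each step contributes an additional factor $\frac{(1+k)^{\beta/2}-1}{k^{\beta/2}}$ in the coefficient of the remainder term. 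After $m$ iterations one arrives at the partial bound
\begin{equation*}
\mathcal{N}_{cA|B_1\cdots B_{N-1}}^\beta\geq\sum_{i=1}^{m}\Bigl(\tfrac{(1+k)^{\beta/2}-1}{k^{\beta/2}}\Bigr)^{i-1}\mathcal{N}_{cAB_i}^\beta+\Bigl(\tfrac{(1+k)^{\beta/2}-1}{k^{\beta/2}}\Bigr)^{m}\mathcal{N}_{cA|B_{m+1}\cdots B_{N-1}}^\beta.
\end{equation*}
For the tail, the reversed hypothesis $\mathcal{N}_{cAB_j}^2\leq k\mathcal{N}_{cA|B_{j+1}\cdots B_{N-1}}^2$ swaps the roles of $\mathcal{N}_{cAB_j}$ and $\mathcal{N}_{cA|B_{j+1}\cdots B_{N-1}}$ in the tripartite bound, producing $\frac{(1+k)^{\beta/2}-1}{k^{\beta/2}}\mathcal{N}_{cAB_j}^\beta+\mathcal{N}_{cA|B_{j+1}\cdots B_{N-1}}^\beta$ at each step; iterating from $j=m+1$ to $j=N-2$ yields the tail estimate parallel to \eqref{Con11}. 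Combining the two and absorbing the common factor $\bigl(\tfrac{(1+k)^{\beta/2}-1}{k^{\beta/2}}\bigr)^{m}$ into the tail reproduces precisely the coefficients displayed in \eqref{negativity1}.

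The only non-routine point is the validity of the tripartite squared inequality for CREN on $2\otimes 2\otimes 2^{n-2}$ systems, and that is an input from the cited literature rather than something to be proved here. Once it is granted, the rest of the argument is a direct transcription of the concurrence proof, and I would present it in exactly that compressed form, pointing to Lemma~\ref{con2} and Theorem~\ref{concurrence1} for the analogous steps rather than rewriting them in full.
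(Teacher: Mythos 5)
Your proposal is correct and is exactly the route the paper intends: the paper gives no separate proof for the CREN theorem, stating only that it follows ``similar to the concurrence and EoF,'' i.e., by substituting $\mathcal{N}_c$ for $\mathcal{C}$ in Lemma~\ref{con2} and Theorem~\ref{concurrence1}, using the squared CREN monogamy inequality from \cite{JSK8,Feng1} together with Lemma~\ref{con1}. Your identification of the tripartite CREN bound as the only non-routine input, and the iteration with the reversed hypothesis for the tail terms, matches the paper's argument step for step.
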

\begin{theorem}
Suppose $k$ is a real number satisfying $0<k\leq 1$.
For any $N$-qubit mixed state $\rho_{AB_1\cdots B_{N-1}}$,
if all $k\mathcal{N}_{cAB_i}^2\geq\mathcal{N}_{cA|B_{i+1}\cdots B_{N-1}}^2$ for all $i=1,2,\ldots,N-2$, then
\begin{equation}\label{negativity2}
\begin{array}{rcl}
\mathcal{N}^\beta_{cA|B_1\cdots B_{N-1}}
&\geq&\mathcal{N}^\beta_{cAB_1}+\frac{(1+k)^{\frac{\beta}{2}}-1}
{k^{\frac{\beta}{2}}}\mathcal{N}^\beta_{cAB_2}+\cdots\\[3.0mm]
&&+\Big(\frac{(1+k)^{\frac{\beta}{2}}-1}{k^{\frac{\beta}{2}}}\Big)^{N-2}\mathcal{N}^\beta_{cAB_{N-1}}
\end{array}
\end{equation}
for $\beta\geq2$.
\end{theorem}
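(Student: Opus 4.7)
The plan is to mirror, step by step, the proof of Theorem~\ref{concurrence2}, replacing concurrence throughout by CREN. The only entanglement-specific input that is really needed is the tripartite squared-CREN monogamy inequality $\mathcal{N}_{cA|BC}^2 \geq \mathcal{N}_{cAB}^2 + \mathcal{N}_{cAC}^2$, valid for multiqubit (indeed $2\otimes 2\otimes 2^{n-2}$) states, which has been established in \cite{JSK8,Feng1}. Once this is in hand, the analytic skeleton is the same as in the concurrence section.

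First I would prove a CREN analog of Lemma~\ref{con2}. Assume $\mathcal{N}_{cAC}^2 \leq k\mathcal{N}_{cAB}^2$ with $\mathcal{N}_{cAB}>0$. Then
\begin{equation*}
\mathcal{N}_{cA|BC}^\beta \geq (\mathcal{N}_{cAB}^2 + \mathcal{N}_{cAC}^2)^{\beta/2} = \mathcal{N}_{cAB}^\beta \Big(1+\tfrac{\mathcal{N}_{cAC}^2}{\mathcal{N}_{cAB}^2}\Big)^{\beta/2}.
\end{equation*}
Since $\beta/2 \geq 1$ and $\mathcal{N}_{cAC}^2/\mathcal{N}_{cAB}^2 \leq k$, inequality~\eqref{Con4} with $m=\beta/2$ yields
\begin{equation*}
\mathcal{N}_{cA|BC}^\beta \geq \mathcal{N}_{cAB}^\beta + \frac{(1+k)^{\beta/2}-1}{k^{\beta/2}}\,\mathcal{N}_{cAC}^\beta.
\end{equation*}
The degenerate case $\mathcal{N}_{cAB}=0$ forces $\mathcal{N}_{cAC}=0$, so the bound is trivially zero.

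Second, I would iterate this tripartite lemma down the chain $B_1,B_2,\ldots,B_{N-1}$. At step $i$, group $B_{i+1}\cdots B_{N-1}$ into a single subsystem and apply the tripartite bound with the roles of $B$ and $C$ played by $B_i$ and $B_{i+1}\cdots B_{N-1}$ respectively. The hypothesis $k\mathcal{N}_{cAB_i}^2 \geq \mathcal{N}_{cA|B_{i+1}\cdots B_{N-1}}^2$ is precisely what licenses the application at every level. A straightforward induction on $N$ then produces the telescoping geometric factors $\left((1+k)^{\beta/2}-1\right)/k^{\beta/2}$ raised to the successive powers $0,1,\ldots,N-2$, matching~\eqref{negativity2} exactly.

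The main obstacle is really only the ingredient external to the excerpt: the basic squared-CREN monogamy relation for multiqubit states. This is the CREN counterpart of the concurrence fact invoked inside Lemma~\ref{con2}, and once it is cited from~\cite{JSK8,Feng1}, the remainder of the argument is purely algebraic — a single application of Lemma~\ref{con1} followed by induction — structurally identical to the concurrence proof, with no new analytic difficulty.
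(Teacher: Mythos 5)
Your proposal is correct and follows exactly the route the paper intends: the paper omits the proof, remarking only that it is ``similar to the concurrence and EoF'' cases, and your argument is precisely that adaptation --- the tripartite squared-CREN monogamy from \cite{JSK8,Feng1} combined with inequality \eqref{Con4} to get the CREN analogue of Lemma \ref{con2}, then iteration along $B_1,\ldots,B_{N-1}$ as in Theorem \ref{concurrence2}. No gaps.
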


\smallskip
\noindent{\bf Example 3} \, \ Consider the state in Example 1 with
$\lambda_0=\lambda_3=\frac{1}{2},\ \lambda_2=\frac{\sqrt{2}}{2}$ and $\lambda_1=\lambda_4=0$.
We have $\mathcal{N}_{cA|BC}=\frac{\sqrt{3}}{2}$, $\mathcal{N}_{cAB}=\frac{\sqrt{2}}{2}$ and $\mathcal{C}_{cAC}=\frac{1}{2}$.
Then $\mathcal{N}_{cAB}^\beta+\big(2^{\frac{\beta}{2}}-1\big)\mathcal{N}_{cAC}^\beta=\big(\frac{\sqrt{2}}{2}\big)^\beta+
\big(2^{\frac{\beta}{2}}-1\big)\big(\frac{1}{2}\big)^\beta$ and
$\mathcal{N}_{cAB}^\beta+\frac{(1+k)^\frac{\beta}{2}-1}{k^{\frac{\beta}{2}}}\mathcal{N}_{cAC}^\beta=\big(\frac{\sqrt{2}}{2}\big)^\beta+
\frac{(1+k)^\frac{\beta}{2}-1}{k^{\frac{\beta}{2}}}\big(\frac{1}{2}\big)^\beta$.
One can see that our result is better than the one in \cite{SM.Fei3} for $\beta\geq2$, thus also better than the ones in \cite{SM.Fei1,SM.Fei2}, see Fig. 3.
\begin{figure}
  \centering
  \includegraphics[width=6cm]{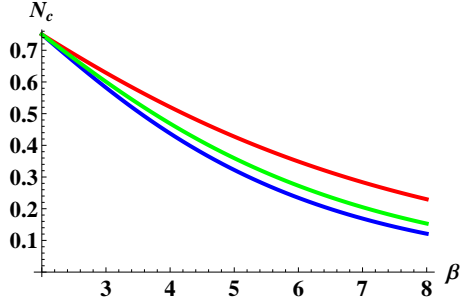}
  \caption{The $y$ axis is the lower bound of the negativity $\mathcal{N}_c(|\psi\rangle_{A|BC})$, which are functions of $\beta$.
  The red (green) line represents the lower bound from our result for $k=0.6$ ($k=0.8$),
  and the blue line represents the lower bound from the result in \cite{SM.Fei3}.}
\end{figure}

For the negativity of assistance $\mathcal{N}_c^a$, we have the following results.
\begin{theorem}
Suppose $k$ is a real number satisfying $0<k\leq 1$.
For an $N$-qubit pure state $|\psi\rangle_{A|B_1\cdots B_{N-1}}$,
if $k(\mathcal{N}^a_{cA|B_i})^2\geq(\mathcal{N}^a_{cA|B_{i+1}\cdots B_{N-1}})^2$
for $i=1,2,\ldots,m$, and $(\mathcal{N}^a_{cAB_j})^2\leq k(\mathcal{N}^a_{cA|B_{j+1}\cdots B_{N-1}})^2$ for $j=m+1,\ldots,N-2$,
$\forall 1\leq m\leq N-3$, $N\geq 4$, then we have
\begin{equation}
\begin{array}{rl}
&[\mathcal{N}_c^a(|\psi\rangle_{A|B_1\cdots B_{N-1}})]^{\beta}\\[1.5mm]
&\ \ \leq(\mathcal{N}^a_{cAB_1})^{\beta}+\Big(\frac{(1+k)^{\frac{\beta}{2}}-1}{k^{\frac{\beta}{2}}}\Big)(\mathcal{N}^a_{cAB_2})^2+\cdots\\[1.5mm]
&\ \ \ \ +\Big(\frac{(1+k)^{\frac{\beta}{2}}-1}{k^{\frac{\beta}{2}}}\Big)^{m-1}(\mathcal{N}^a_{cAB_{m}})^\beta\\[1.5mm]
&\ \ \ \ +\Big(\frac{(1+k)^{\frac{\beta}{2}}-1}{k^{\frac{\beta}{2}}}\Big)^{m+1}[(\mathcal{N}^a_{cAB_{m+1}})^\beta+\cdots\\[1.5mm]
&\ \ \ \ +(\mathcal{N}^a_{cAB_{N-2}})^\beta]\\[1.5mm]
&\ \ \ \ +\Big(\frac{(1+k)^{\frac{\beta}{2}}-1}{k^{\frac{\beta}{2}}}\Big)^m(\mathcal{N}^a_{c AB_{N-1}})^\beta
\end{array}
\end{equation}
for all $0\leq\beta\leq2$.
\end{theorem}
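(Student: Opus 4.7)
The plan is to mirror the proof of the corresponding CoA polygamy theorem proved earlier in the paper (the one after Eq.\,\eqref{Con16}), substituting the negativity of assistance $\mathcal{N}_c^a$ for the concurrence of assistance $\mathcal{C}_a$. The whole argument rests on two ingredients: a ``base'' polygamy inequality at the level of $\beta=2$, namely $[\mathcal{N}_c^a(|\psi\rangle_{A|BC})]^{2}\le (\mathcal{N}^a_{cAB})^{2}+(\mathcal{N}^a_{cAC})^{2}$ for any tripartite pure state with $A,B$ qubits and $C$ arbitrary (this is the known CREN-of-assistance polygamy from Ref.\,\cite{JSK8}, which coincides with the CoA polygamy on two-qubit marginals since $\mathcal{N}^a_c=\mathcal{C}^a$ there), and the scalar inequality \eqref{Con5} of Lemma \ref{con1} with exponent $n=\beta/2\in[0,1]$.

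First I would establish a three-party base lemma analogous to Lemma \ref{con2}: if $(\mathcal{N}^a_{cAC})^{2}\le k(\mathcal{N}^a_{cAB})^{2}$ with $0<k\le 1$, then for $0\le\beta\le 2$,
\begin{equation*}
[\mathcal{N}^a_c(|\psi\rangle_{A|BC})]^\beta \le (\mathcal{N}^a_{cAB})^\beta+\frac{(1+k)^{\beta/2}-1}{k^{\beta/2}}(\mathcal{N}^a_{cAC})^\beta.
\end{equation*}
The proof is the $0\le n\le 1$ mirror of Lemma \ref{con2}: start from the squared polygamy bound, take the $\beta/2$-th power (using that $x\mapsto x^{\beta/2}$ is monotone increasing), factor out $(\mathcal{N}^a_{cAB})^\beta$, and finally apply \eqref{Con5} with $t=(\mathcal{N}^a_{cAC})^{2}/(\mathcal{N}^a_{cAB})^{2}\le k$. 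The degenerate case $\mathcal{N}^a_{cAB}=0$ forces $\mathcal{N}^a_{cAC}=0$ by the same squared bound, so the inequality holds trivially.

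Then I would iterate the base lemma in precisely the same pattern as Theorem \ref{concurrence1}, but in the polygamy direction. For $i=1,\ldots,m$, the hypothesis $k(\mathcal{N}^a_{cAB_i})^{2}\ge(\mathcal{N}^a_{cA|B_{i+1}\cdots B_{N-1}})^{2}$ lets me peel off the $i$-th term from $[\mathcal{N}^a_c(|\psi\rangle_{A|B_i\cdots B_{N-1}})]^{\beta}$ with a multiplicative factor of $\big(\frac{(1+k)^{\beta/2}-1}{k^{\beta/2}}\big)^{i-1}$ accumulating. After $m$ steps, I am left bounding $[\mathcal{N}^a_c(|\psi\rangle_{A|B_{m+1}\cdots B_{N-1}})]^{\beta}$ from above. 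For $j=m+1,\ldots,N-2$ the condition $(\mathcal{N}^a_{cAB_j})^{2}\le k(\mathcal{N}^a_{cA|B_{j+1}\cdots B_{N-1}})^{2}$ reverses the role of the two factors in the base lemma; applying it in that swapped form produces the weight $\big(\frac{(1+k)^{\beta/2}-1}{k^{\beta/2}}\big)^{m+1}$ on every term $AB_{m+1},\ldots,AB_{N-2}$ and leaves the residual $(\mathcal{N}^a_{cAB_{N-1}})^\beta$ with weight $\big(\frac{(1+k)^{\beta/2}-1}{k^{\beta/2}}\big)^{m}$, exactly as stated.

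The main obstacle I anticipate is nothing in the iteration itself -- that is entirely mechanical once the base lemma is in place -- but rather justifying the starting squared polygamy inequality for the multi-qubit pure state at each intermediate bipartite cut $A\,|\,B_j\,|\,B_{j+1}\cdots B_{N-1}$. Because the reduced state on $AB_{j+1}\cdots B_{N-1}$ is generally mixed, one must invoke the extension of the CREN-of-assistance polygamy to mixed states of the form $2\otimes 2\otimes 2^{n-2}$, which is precisely the regime established in Ref.\,\cite{JSK8} via the identity $\mathcal{N}^a_c=\mathcal{C}^a$ on two-qubit marginals. Once that is cited, the remainder of the argument is a verbatim adaptation of the proofs of Theorem \ref{concurrence1} and of the earlier CoA polygamy theorem.
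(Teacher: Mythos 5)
Your proposal follows exactly the route the paper intends: the paper gives no separate argument for this theorem, stating only that it is ``similar to the concurrence and EoF'' cases, i.e.\ one uses the squared polygamy inequality for the negativity of assistance (equivalent to the CoA one, since $\mathcal{N}^a_c=\mathcal{C}_a$ on two-qubit marginals and negativity coincides with concurrence for a qubit-versus-rest pure cut), applies inequality \eqref{Con5} of Lemma \ref{con1} with $n=\beta/2\in[0,1]$, and iterates as in Theorem \ref{concurrence1}. Your base lemma, its degenerate case, and the iteration (including the caveat about applying the squared polygamy bound at intermediate, generally mixed, cuts) reproduce the paper's implicit proof, so the proposal is correct and takes essentially the same approach.
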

\begin{theorem}
Suppose $k$ is a real number satisfying $0<k\leq 1$.
For any $N$-qubit mixed state $|\psi\rangle_{AB_1\cdots B_{N-1}}$,
if $k(\mathcal{N}^a_{cAB_i})^2\geq(\mathcal{N}^a_{cA|B_{i+1}\cdots B_{N-1}})^2$
for all $i=1,2,\ldots,N-2$, then
\begin{equation}
\begin{array}{rl}
&[\mathcal{N}^a_c(|\psi\rangle_{A|B_1\cdots B_{N-1}})]^\beta\\[2.0mm]
&\ \ \leq(\mathcal{N}^a_{cAB_1})^\beta+\Big(\frac{(1+k)^{\frac{\beta}{2}}-1}{k^{\frac{\beta}{2}}}\Big)(\mathcal{N}^a_{cAB_2})^\beta+\cdots\\[2.0mm]
&\ \ \ \ +\Big(\frac{(1+k)^{\frac{\beta}{2}}-1}{k^{\frac{\beta}{2}}}\Big)^{N-2}(\mathcal{N}^a_{cAB_{N-1}})^\beta
\end{array}
\end{equation}
for $0\leq\beta\leq2$.
\end{theorem}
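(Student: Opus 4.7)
\medskip
\noindent\textbf{Proof plan.} The plan is to imitate, almost verbatim, the proof just given for the concurrence-of-assistance polygamy theorem, replacing $\mathcal{C}_a$ by $\mathcal{N}^a_c$ throughout. The three ingredients I will need are: (i) a two-part polygamy bound of the form $(\mathcal{N}^a_{cA|BC})^{2}\le(\mathcal{N}^a_{cAB})^{2}+(\mathcal{N}^a_{cAC})^{2}$ for the relevant pure-state bipartition; (ii) Lemma \ref{con1}, specifically the sub-additive inequality \eqref{Con5} with exponent $n=\beta/2\in[0,1]$; and (iii) a telescoping induction on the number of $B_i$'s, identical in shape to the induction in the proof of Theorem \ref{concurrence2}.

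\medskip
\noindent\textbf{Step 1 (base inequality).} First I would record the known two-part polygamy relation for CREN-of-assistance on multiqubit pure states, namely $(\mathcal{N}^a_{cA|B_1\cdots B_{N-1}})^{2}\le\sum_{i=1}^{N-1}(\mathcal{N}^a_{cAB_i})^{2}$, which is established in the references already cited in the paper (in particular [JSK8]); applied to the coarse bipartition $\{B_i\}\,|\,\{B_{i+1},\dots,B_{N-1}\}$ this yields
$$
(\mathcal{N}^a_{cA|B_i\cdots B_{N-1}})^{2}\;\le\;(\mathcal{N}^a_{cAB_i})^{2}+(\mathcal{N}^a_{cA|B_{i+1}\cdots B_{N-1}})^{2},
$$
which is the only ``bipartite'' polygamy input needed for the recursion.

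\medskip
\noindent\textbf{Step 2 (one-step bound via Lemma \ref{con1}).} Under the hypothesis $k(\mathcal{N}^a_{cAB_i})^{2}\ge(\mathcal{N}^a_{cA|B_{i+1}\cdots B_{N-1}})^{2}$, set $t=(\mathcal{N}^a_{cA|B_{i+1}\cdots B_{N-1}})^{2}/(\mathcal{N}^a_{cAB_i})^{2}\in[0,k]$ and apply \eqref{Con5} with $n=\beta/2\in[0,1]$ to obtain
\begin{equation*}
\begin{array}{rl}
(\mathcal{N}^a_{cA|B_i\cdots B_{N-1}})^{\beta}
 & \le\bigl[(\mathcal{N}^a_{cAB_i})^{2}+(\mathcal{N}^a_{cA|B_{i+1}\cdots B_{N-1}})^{2}\bigr]^{\beta/2}\\[1mm]
 & =(\mathcal{N}^a_{cAB_i})^{\beta}(1+t)^{\beta/2}\\[1mm]
 & \le(\mathcal{N}^a_{cAB_i})^{\beta}+\dfrac{(1+k)^{\beta/2}-1}{k^{\beta/2}}(\mathcal{N}^a_{cA|B_{i+1}\cdots B_{N-1}})^{\beta}.
\end{array}
\end{equation*}
(When $\mathcal{N}^a_{cAB_i}=0$ the hypothesis forces the other term to vanish, so the inequality holds trivially.)

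\medskip
\noindent\textbf{Step 3 (iteration).} Applying Step 2 with $i=1$ gives an upper bound for $(\mathcal{N}^a_{cA|B_1\cdots B_{N-1}})^{\beta}$ in terms of $(\mathcal{N}^a_{cAB_1})^{\beta}$ and $(\mathcal{N}^a_{cA|B_2\cdots B_{N-1}})^{\beta}$; applying Step 2 again with $i=2,3,\dots,N-2$ and substituting at each stage produces the telescoping bound
$$(\mathcal{N}^a_{cAB_1})^{\beta}+\dfrac{(1+k)^{\beta/2}-1}{k^{\beta/2}}(\mathcal{N}^a_{cAB_2})^{\beta}+\cdots+\Bigl(\dfrac{(1+k)^{\beta/2}-1}{k^{\beta/2}}\Bigr)^{N-2}(\mathcal{N}^a_{cAB_{N-1}})^{\beta},$$
exactly as claimed. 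This is the same bookkeeping as in \eqref{Con10}, with ``$\ge$'' replaced by ``$\le$''.

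\medskip
\noindent\textbf{Anticipated obstacle.} The only non-routine point is Step 1: one must be sure that the CREN-of-assistance genuinely obeys the quadratic polygamy inequality $(\mathcal{N}^a_{cA|BC})^{2}\le(\mathcal{N}^a_{cAB})^{2}+(\mathcal{N}^a_{cAC})^{2}$ on pure multiqubit states under the bipartition used here, since without it the induction collapses. This, however, is precisely the statement that was used to establish \eqref{Con3}-type monogamy/polygamy results in the prior literature invoked by the authors, so once it is cited the remainder of the argument is the direct dualization of the proof of Theorem \ref{concurrence2}.
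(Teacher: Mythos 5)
Your proposal is correct and is exactly the argument the paper intends: the paper gives no explicit proof for this theorem (it only says the results follow ``similar to the concurrence and EoF''), and the intended route is precisely your dualization — the CREN-of-assistance polygamy input (equivalent to the concurrence-of-assistance relation \eqref{Con14} since $\mathcal{N}^a_c$ coincides with $\mathcal{C}_a$ on two-qubit reductions and with $\mathcal{C}$ on pure states with a qubit on side $A$), combined with inequality \eqref{Con5} of Lemma \ref{con1} with $n=\beta/2$, iterated as in the proof of Theorem \ref{concurrence2}. Your flagged obstacle in Step 1 is the same tacit input the paper relies on from the cited literature, so nothing further is needed.
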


\section{Tighter monogamy relations for Tsallis-$q$ entanglement and R\'enyi-$\alpha$ entanglement}
In this section, we study the Tsallis-$q$ entanglement and R\'enyi-$\alpha$ entanglement,
and establish the corresponding monogamy and polygamy relations for the two entanglement measures, respectively.

\subsection{Tighter monogamy and polygamy relations for Tsallis-$q$ entanglement}

The Tsallis-$q$ entanglement of a bipartite pure state $|\psi\rangle_{AB}$ is defined as \cite{JSK3}
\begin{equation}
T_q(|\psi\rangle_{AB})=S_q(\rho_A)=\frac{1}{q-1}(1-{\rm tr}\rho_A^q),
\end{equation}
where $q>0$ and $q\neq1$.
For the case $q$ tends to 1, $T_q(\rho)$ is just the von Neumann entropy, $\lim\limits_{q\rightarrow 1}T_q(\rho)=-{\rm tr}\rho\log_2\rho=S(\rho)$.
The Tsallis-$q$ entanglement of a bipartite mixed state $\rho_{AB}$ is given by
$T_q(\rho_{AB})=\min\limits_{\{p_i,|\psi_i\rangle\}}\sum\limits_{i}p_iT_q(|\psi_i\rangle)$
with the minimum taken over all possible pure state decompositions of $\rho_{AB}$.
For $\frac{5-\sqrt{13}}{2}\leq q\leq\frac{5+\sqrt{13}}{2}$, Yuan {\it et al.} proposed an analytic relationship
between the Tsallis-$q$ entanglement and concurrence,
\begin{equation}\label{Tq1}
T_q(|\psi\rangle_{AB})=g_q(\mathcal{C}^2(|\psi\rangle_{AB})),
\end{equation}
where
\begin{equation}\label{Tq2}
g_q(x)=\frac{1}{q-1}\Big[1-\Big(\frac{1+\sqrt{1-x}}{2}\Big)^q-\Big(\frac{1-\sqrt{1-x}}{2}\Big)^q\Big]
\end{equation}
with $0\leq x\leq1$ \cite{Yuan}.
It has also been proved that $T_q(|\psi\rangle)=g_q(\mathcal{C}^2(|\psi\rangle))$ if $|\psi\rangle$ is a $2\otimes m$ pure state,
and $T_q(\rho)=g_q(\mathcal{C}^2(\rho))$ if $\rho$ is a two-qubit mixed state.
Hence, \eqref{Tq1} holds for any $q$ such that $g_q(x)$ in \eqref{Tq2} is monotonically increasing and convex.
Particularly, one has that
\begin{equation}
g_q(x^2+y^2)\geq g_q(x^2)+g_q(y^2)
\end{equation}
for $2\leq q\leq 3$.
In Ref. \cite{JSK3}, Kim provided a monogamy relation for the Tsallis-$q$ entanglement,
\begin{equation}\label{Tq4}
T_{qA|B_1B_2\cdots B_{N-1}}\geq\sum\limits_{i=1}^{N-1}T_{qA|B_i},
\end{equation}
where $i=1,2,\ldots,N-1$ and $2\leq q\leq 3$.
Later, this relation was improved as follows:
if $\mathcal{C}_{AB_i}\geq\mathcal{C}_{A|B_{i+1}\cdots B_{N-1}}$
for $i=1,2,\ldots,m$, and $\mathcal{C}_{AB_j}\leq\mathcal{C}_{A|B_{j+1}\cdots B_{N-1}}$
for $j=m+1,\ldots,N-2$, $\forall 1\leq m\leq N-3$, $N\geq4$, then
\begin{equation}\label{Tq3}
\begin{array}{rl}
&T^\beta_{qA|B_1B_2\cdots B_{N-1}}\\
&\ \ \geq T^\beta_{qA|B_1}+(2^\beta-1)T^\beta_{qA|B_2}+\cdots+(2^\beta-1)^{m-1}T^\beta_{qA|B_m}\\
&\ \ \ \ +(2^\beta-1)^{m+1}(T^\beta_{qA|B_{m+1}}+\cdots+T^\beta_{qA|B_{N-2}})\\
&\ \ \ \ +(2^\beta-1)^mT^\beta_{qA|B_{N-1}},
\end{array}
\end{equation}
where $\beta\geq1$ and $T^\beta_{qA|B_1B_2\cdots B_{N-1}}$ quantifies the Tsallis-$q$ entanglement under partition $A|B_1B_2\cdots B_{N-1}$,
and $T^\beta_{qA|B_i}$ quantifies that of the two-qubit subsystem $AB_i$ with $2\leq q\leq3$.
Moreover, for $\frac{5-\sqrt{13}}{2}\leq q\leq\frac{5+\sqrt{13}}{2}$, one has
\begin{equation}
T^2_{qA|B_1B_2\cdots B_{N-1}}\geq\sum\limits_{i=1}^{N-1}T^2_{qA|B_i}.
\end{equation}

We now provide monogamy relations which are tighter than \eqref{Tq4} and \eqref{Tq3}.
\begin{theorem}
Suppose $k$ is a real number satisfying $0<k\leq 1$.
For an arbitrary $N$-qubit mixed state $\rho_{AB_1B_2\cdots B_{N-1}}$, if
$kT_{qAB_i}\geq T_{qA|B_{i+1}\cdots B_{N-1}}$ for $i=1,2,\ldots,m$, and
$T_{qAB_j}\leq kT_{qA|B_{j+1}\cdots B_{N-1}}$ for $j=m+1,\ldots,N-2$, $\forall 1\leq m\leq N-3$, $N\geq4$,
then we have
\begin{equation}\label{Tq5}
\begin{array}{rl}
&T^\beta_{qA|B_1B_2\cdots B_{N-1}}\\[1.5mm]
&\ \ \geq T^\beta_{qA|B_1}+\frac{(1+k)^\beta-1}{k^\beta}T^\beta_{qA|B_2}+\cdots\\[2.0mm]
&\ \ \ \ +\Big(\frac{(1+k)^\beta-1}{k^\beta}\Big)^{m-1}T^\beta_{qA|B_m}\\[2.0mm]
&\ \ \ \ +\Big(\frac{(1+k)^\beta-1}{k^\beta}\Big)^{m+1}(T^\beta_{qA|B_{m+1}}+\cdots+T^\beta_{qA|B_{N-2}})\\[2.0mm]
&\ \ \ \ +\Big(\frac{(1+k)^\beta-1}{k^\beta}\Big)^mT^\beta_{qA|B_{N-1}},
\end{array}
\end{equation}
for $\beta\geq1$ and $2\leq q\leq3$.
\end{theorem}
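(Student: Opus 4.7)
The proof will parallel that of Theorem \ref{concurrence1}, with concurrence replaced by Tsallis-$q$ entanglement. The plan is to first establish a tripartite ``splitting'' inequality and then iterate it, using the two opposing hypotheses to control each direction.

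First I would prove the following tripartite lemma: if $\rho_{ABC}$ (with $BC$ possibly a composite block of qubits) satisfies $T_{qAC}\leq kT_{qAB}$, then
\begin{equation*}
T^\beta_{qA|BC}\geq T^\beta_{qAB}+\frac{(1+k)^\beta-1}{k^\beta}T^\beta_{qAC}
\end{equation*}
for $\beta\geq 1$ and $2\leq q\leq 3$. Starting from the base Tsallis-$q$ monogamy $T_{qA|BC}\geq T_{qAB}+T_{qAC}$ (which underlies \eqref{Tq4} via $g_q(x^2+y^2)\geq g_q(x^2)+g_q(y^2)$), I would raise both sides to the $\beta$-th power, factor out $T^\beta_{qAB}$, and set $t=T_{qAC}/T_{qAB}\leq k$; then Lemma \ref{con1} with exponent $\beta\geq 1$ yields the factor $\frac{(1+k)^\beta-1}{k^\beta}$. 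The degenerate case $T_{qAB}=0$ forces $T_{qAC}=0$ and the bound is trivial.

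Next I would iterate this inequality along the chain $A|B_1B_2\cdots B_{N-1}$. For $i=1,\ldots,m$, the hypothesis $kT_{qAB_i}\geq T_{qA|B_{i+1}\cdots B_{N-1}}$ lets me peel off $B_i$ at each step with coefficient $\bigl(\frac{(1+k)^\beta-1}{k^\beta}\bigr)^{i-1}$, producing the first $m$ terms of the target bound and leaving a residual $\bigl(\frac{(1+k)^\beta-1}{k^\beta}\bigr)^m T^\beta_{qA|B_{m+1}\cdots B_{N-1}}$. For this residual, the opposite hypothesis $T_{qAB_j}\leq kT_{qA|B_{j+1}\cdots B_{N-1}}$ for $j=m+1,\ldots,N-2$ lets me apply the tripartite lemma with the roles of the two summands interchanged, so that $\frac{(1+k)^\beta-1}{k^\beta}$ now multiplies $T^\beta_{qAB_j}$ rather than the tail. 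Iterating from $j=m+1$ up to $j=N-2$ assigns coefficient $\bigl(\frac{(1+k)^\beta-1}{k^\beta}\bigr)^{m+1}$ to each of $T^\beta_{qAB_{m+1}},\ldots,T^\beta_{qAB_{N-2}}$ and coefficient $\bigl(\frac{(1+k)^\beta-1}{k^\beta}\bigr)^m$ to $T^\beta_{qAB_{N-1}}$, matching \eqref{Tq5} exactly.

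The main obstacle I expect is verifying that the base two-block monogamy $T_{qA|BC}\geq T_{qAB}+T_{qAC}$ extends to partitions where $BC$ is a composite of several qubits, since the primitive statement of \eqref{Tq4} concerns qubit-by-qubit cuts. As in the proofs of the earlier results \eqref{Tq4} and \eqref{Tq3}, one obtains this by applying the single-qubit monogamy recursively or by invoking the $g_q$ superadditivity for $2\leq q\leq 3$; once this block-level splitting is granted, the remainder of the argument is a routine induction that mirrors the concurrence case almost verbatim.
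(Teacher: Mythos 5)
Your iteration scheme is the intended one, but the ingredient you yourself flag as ``the main obstacle'' is a genuine gap, and neither of the fixes you offer closes it. Your tripartite lemma needs the block-level splitting $T_{qA|B_iB_{i+1}\cdots B_{N-1}}\geq T_{qAB_i}+T_{qA|B_{i+1}\cdots B_{N-1}}$, where the second term is the \emph{convex-roof} Tsallis-$q$ entanglement of the mixed state $\rho_{AB_{i+1}\cdots B_{N-1}}$ across the cut $A|B_{i+1}\cdots B_{N-1}$ (this is what your hypotheses $kT_{qAB_i}\geq T_{qA|B_{i+1}\cdots B_{N-1}}$ refer to). Applying the qubit-by-qubit monogamy \eqref{Tq4} recursively only yields $T_{qA|B_1\cdots B_{N-1}}\geq\sum_i T_{qAB_i}$ and gives no control of the block quantity. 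Invoking the superadditivity $g_q(x^2+y^2)\geq g_q(x^2)+g_q(y^2)$ together with concurrence monogamy gives
\begin{equation*}
T_{qA|B_iB_{i+1}\cdots B_{N-1}}\geq g_q(\mathcal{C}^2_{AB_i})+g_q(\mathcal{C}^2_{A|B_{i+1}\cdots B_{N-1}})=T_{qAB_i}+g_q(\mathcal{C}^2_{A|B_{i+1}\cdots B_{N-1}}),
\end{equation*}
but for a multiqubit block one only has $T_{qA|B_{i+1}\cdots B_{N-1}}\geq g_q(\mathcal{C}^2_{A|B_{i+1}\cdots B_{N-1}})$ (equality holds for pure states and two-qubit cuts, not for the intermediate mixed blocks), so $g_q(\mathcal{C}^2)$ of the block cannot be replaced by $T_q$ of the block --- the inequality points the wrong way, and convexity of $g_q$ in fact makes $T_q$ of the block the \emph{larger} quantity. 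Hence the splitting inequality with $T_{qAC}$ on the right is not available from the cited results, and your induction stalls already at the first step.

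The workable route is the analogue of the paper's proof of Theorem \ref{eof1}: bound $T_{qA|B_1\cdots B_{N-1}}\geq g_q(\mathcal{C}^2_{A|B_1\cdots B_{N-1}})$ once (the analogue of \eqref{EoF6}), then iterate entirely at the level of $g_q$ and concurrence, using $\mathcal{C}^2_{A|B_iB_{i+1}\cdots B_{N-1}}\geq\mathcal{C}^2_{AB_i}+\mathcal{C}^2_{A|B_{i+1}\cdots B_{N-1}}$, the superadditivity of $g_q$, and Lemma \ref{con1} with exponent $\beta\geq1$, converting to Tsallis entanglement only at the end via $g_q(\mathcal{C}^2_{AB_i})=T_{qAB_i}$ for the two-qubit reduced states. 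Note that your hypotheses for $i\leq m$ do convert correctly into the needed $g_q$-level conditions, since $kg_q(\mathcal{C}^2_{AB_i})=kT_{qAB_i}\geq T_{qA|B_{i+1}\cdots B_{N-1}}\geq g_q(\mathcal{C}^2_{A|B_{i+1}\cdots B_{N-1}})$; the reversed conditions for $j=m+1,\ldots,N-2$ require additional care, because there the same bound $T_{qA|B_{j+1}\cdots B_{N-1}}\geq g_q(\mathcal{C}^2_{A|B_{j+1}\cdots B_{N-1}})$ runs against you (a delicacy the paper itself passes over silently). In any case, as written your proof rests on an unproven block monogamy for $T_q$, so the argument is not complete.
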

\begin{theorem}
Suppose $k$ is a real number satisfying $0<k\leq 1$.
For any $N$-qubit mixed state $\rho_{AB_1\cdots B_{N-1}}$,
if all $kT_{qAB_i}\geq T_{qA|B_{i+1}\cdots B_{N-1}}$ for $i=1,2,\ldots,N-2$, then we have
\begin{equation}\label{Tq6}
\begin{array}{rcl}
T_{qA|B_1\cdots B_{N-1}}^\beta
&\geq& T_{qAB_1}^\beta+\frac{(1+k)^{\beta}-1}{k^{\beta}}T_{qAB_2}^\beta+\cdots\\[3.0mm]
&&+\Big(\frac{(1+k)^{\beta}-1}{k^{\beta}}\Big)^{N-2}T_{qAB_{N-1}}^\beta,
\end{array}
\end{equation}
for $\beta\geq1$ and $2\leq q\leq3$.
\end{theorem}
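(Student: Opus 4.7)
The plan is to imitate the induction used to prove Theorem \ref{concurrence2}, replacing the two-term concurrence monogamy by its Tsallis-$q$ analogue. The single ingredient I need is the following ``peel-off'' inequality: for any $N$-qubit state with $A$ one qubit and $B$, $C$ two blocks of the remaining qubits, whenever $kT_{qAB}\geq T_{qA|C}$ one has
\begin{equation}\label{proposal:twoterm}
T_{qA|BC}^{\beta}\geq T_{qAB}^{\beta}+\frac{(1+k)^{\beta}-1}{k^{\beta}}\,T_{qA|C}^{\beta},
\end{equation}
valid for $\beta\geq 1$ and $2\leq q\leq 3$. To derive \eqref{proposal:twoterm} I would start from the Tsallis-$q$ monogamy $T_{qA|BC}\geq T_{qAB}+T_{qA|C}$ that underlies \eqref{Tq4}, rewrite $T_{qA|BC}^{\beta}\geq T_{qAB}^{\beta}(1+t)^{\beta}$ with $t=T_{qA|C}/T_{qAB}\leq k$, and then invoke Lemma \ref{con1} with exponent $m=\beta\geq 1$ to lower-bound $(1+t)^{\beta}$ by $1+\frac{(1+k)^{\beta}-1}{k^{\beta}}t^{\beta}$. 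The corner case $T_{qAB}=0$ forces $T_{qA|C}=0$ by the hypothesis and makes \eqref{proposal:twoterm} trivial.

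With \eqref{proposal:twoterm} in hand, the result follows by $N-2$ successive applications. Under the standing hypothesis $kT_{qAB_i}\geq T_{qA|B_{i+1}\cdots B_{N-1}}$ for $i=1,\ldots,N-2$, first apply \eqref{proposal:twoterm} with $B=B_1$ and $C=B_2\cdots B_{N-1}$ to obtain
\begin{equation}\label{proposal:step1}
T_{qA|B_1\cdots B_{N-1}}^{\beta}\geq T_{qAB_1}^{\beta}+\frac{(1+k)^{\beta}-1}{k^{\beta}}T_{qA|B_2\cdots B_{N-1}}^{\beta}.
\end{equation}
The hypothesis at $i=2$ now permits a further application of \eqref{proposal:twoterm} to the trailing term; continuing this telescoping for $N-2$ rounds reproduces exactly the right-hand side of \eqref{Tq6}, with the coefficients $\bigl(\frac{(1+k)^{\beta}-1}{k^{\beta}}\bigr)^{j}$ appearing at stage $j$. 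This is the same inductive chain displayed in detail in the proof of Theorem \ref{concurrence1}, so essentially no new bookkeeping is required.

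The main obstacle I expect is justifying the mixed-state form of the base monogamy $T_{qA|BC}\geq T_{qAB}+T_{qA|C}$ when $C$ is a multi-qubit block. For pure states it is immediate from $T_q(|\psi\rangle)=g_q(\mathcal{C}^{2}(|\psi\rangle))$ on $2\otimes m$ systems combined with $g_q(x^{2}+y^{2})\geq g_q(x^{2})+g_q(y^{2})$ and the concurrence monogamy $\mathcal{C}_{A|BC}^{2}\geq\mathcal{C}_{AB}^{2}+\mathcal{C}_{A|C}^{2}$. For mixed $\rho$ one has to pass to an optimal Tsallis-$q$ decomposition and then use the monotonicity and convexity of $g_q$ on $[0,1]$ to push the bound through the convex-roof, exactly as was done for the EoF in the proof of Theorem \ref{eof1}. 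Once this is in place, the rest of the argument is a direct application of Lemma \ref{con1} and the telescoping induction described above.
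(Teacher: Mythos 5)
There is a genuine gap in your intermediate step, even though your final telescoping bookkeeping and your use of Lemma \ref{con1} (with exponent $\beta\geq 1$) are exactly what is needed. Your peel-off inequality $T_{qA|BC}^{\beta}\geq T_{qAB}^{\beta}+\frac{(1+k)^{\beta}-1}{k^{\beta}}T_{qA|C}^{\beta}$ requires the mixed-state block monogamy $T_{qA|BC}\geq T_{qAB}+T_{qA|C}$ with $C$ a multiqubit block, and this is precisely what neither \eqref{Tq4} nor your sketch delivers. The known route gives, for the optimal decomposition of $\rho$, $T_{qA|BC}\geq g_q(\mathcal{C}^2_{A|BC})\geq g_q(\mathcal{C}^2_{AB})+g_q(\mathcal{C}^2_{A|C})$, and while $g_q(\mathcal{C}^2_{AB})=T_{qAB}$ for the two-qubit pair, the convex-roof/convexity argument you invoke for the block only yields $T_{qA|C}\geq g_q(\mathcal{C}^2_{A|C})$ --- the \emph{wrong} direction to replace $g_q(\mathcal{C}^2_{A|C})$ by $T_{qA|C}$ on the right-hand side. (Note that the EoF proof of Theorem \ref{eof1}, which you cite as the model, never proves a block monogamy for $E$ itself: after \eqref{EoF6} the entire chain is carried out for $f^{\beta}(\mathcal{C}^2)$, and EoF reappears only for the two-qubit reduced states at the very end.)

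The repair is to run your induction at the level of $g_q(\mathcal{C}^2)$ rather than $T_q$: start from $T^{\beta}_{qA|B_1\cdots B_{N-1}}\geq g_q^{\beta}(\mathcal{C}^2_{A|B_1\cdots B_{N-1}})$, use $\mathcal{C}^2_{A|B_i\cdots B_{N-1}}\geq \mathcal{C}^2_{AB_i}+\mathcal{C}^2_{A|B_{i+1}\cdots B_{N-1}}$ together with $g_q(x^2+y^2)\geq g_q(x^2)+g_q(y^2)$ for $2\leq q\leq 3$, and apply Lemma \ref{con1} to the quantities $g_q(\mathcal{C}^2_{\cdot})$ at each step. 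The hypotheses transfer in the needed direction, since $g_q(\mathcal{C}^2_{A|B_{i+1}\cdots B_{N-1}})\leq T_{qA|B_{i+1}\cdots B_{N-1}}\leq kT_{qAB_i}=k\,g_q(\mathcal{C}^2_{AB_i})$, and at the end every surviving term $g_q^{\beta}(\mathcal{C}^2_{AB_i})$ equals $T^{\beta}_{qAB_i}$ because the $\rho_{AB_i}$ are two-qubit states. With this substitution your telescoping argument goes through verbatim and proves \eqref{Tq6}; as written, however, the induction leans on an unproven (and by your method unprovable) block inequality for $T_q$.
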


\smallskip
\noindent{\bf Example 4} \, \ Consider the quantum state given in Example 1 with
$\lambda_0=\lambda_3=\frac{1}{2},\ \lambda_2=\frac{\sqrt{2}}{2}$ and $\lambda_1=\lambda_4=0$.
For $q=2$, one has $T_{2A|BC}=\frac{3}{8}$, $T_{2AB}=\frac{1}{4}$ and $T_{2AC}=\frac{1}{8}$.
Then $T_{2AB}^\beta+\big(2^{\beta}-1\big)T_{2AC}^\beta=(\frac{1}{4})^\beta+\big(2^{\beta}-1\big)(\frac{1}{8})^\beta$ and
$T_{2AB}^{\beta}+\frac{(1+k)^\beta-1}{k^{\beta}}T_{2AC}^{\beta}=\big(\frac{1}{4}\big)^{\beta}+
\frac{(1+k)^\beta-1}{k^{\beta}}\big(\frac{1}{8}\big)^\beta$.
It can be seen that our result is better than the one in \cite{SM.Fei3} for $\beta\geq1$, and also better than the ones given in \cite{SM.Fei1,SM.Fei2}, see Fig. 4.
\begin{figure}
  \centering
  \includegraphics[width=6cm]{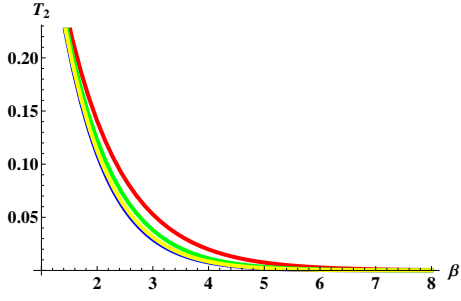}
  \caption{The $y$ axis is the lower bound of the Tsallis-$q$ entanglement $T_q^\beta(|\psi\rangle_{A|BC})$.
  The red (green resp. yellow) line represents the lower bound from our result for $k=0.5$ ($k=0.7$ resp. $k=0.9$),
  and the blue line represents the lower bound from the result in \cite{SM.Fei3}.}
\end{figure}

As a dual quantity to Tsallis-$q$ entanglement, the Tsallis-$q$ entanglement of assistance (TEoA) is defined by \cite{JSK3},
$T_q^a(\rho_{AB})=\max\limits_{\{p_i,|\psi_i\rangle\}}\sum\limits_ip_iT_q(|\psi_i\rangle)$,
where the maximum is taken over all possible pure state decompositions of $\rho_{AB}$.
If $1\leq q\leq 2$ or $3\leq q\leq4$, the function $g_q$ defined in \eqref{Tq2} satisfies
\begin{equation}
g_q(\sqrt{x^2+y^2})\leq g_q(x)+g_q(y),
\end{equation}
which leads to the Tsallis polygamy inequality
\begin{equation}\label{Tq7}
T^a_{qA|B_1B_2\cdots B_{N-1}}\leq\sum\limits_{i=1}^{N-1}T^a_{qA|B_i}
\end{equation}
for any multi-qubit state $\rho_{A|B_1B_2\cdots B_{N-1}}$ \cite{JSK9}.
Here we provide tighter polygamy relations related to Tsallis-$q$ entanglement.
We have the following results.
\begin{theorem}
Suppose $k$ is a real number satisfying $0<k\leq 1$.
For any $N$-qubit mixed state $\rho_{AB_1\cdots B_{N-1}}$,
if $kT^a_{qAB_i}\geq T^a_{qA|B_{i+1}\cdots B_{N-1}}$ for $i=1,2,\ldots,m$, and
$T^a_{qAB_j}\leq kT^a_{qA|B_{j+1}\cdots B_{N-1}}$ for $j=m+1,\ldots,N-2$, $\forall 1\leq m\leq N-3$, $N\geq4$, then
\begin{equation}\label{Tq8}
\begin{array}{rl}
&(T^a_{qA|B_1B_2\cdots B_{N-1}})^\beta\\[2.0mm]
&\ \ \leq (T^a_{qAB_1})^\beta+\frac{(1+k)^\beta-1}{k^\beta}(T^a_{qAB_2})^\beta+\cdots\\[2.0mm]
&\ \ \ \ +\Big(\frac{(1+k)^\beta-1}{k^\beta}\Big)^{m-1}(T^a_{qAB_m})^\beta\\[2.0mm]
&\ \ \ \ +\Big(\frac{(1+k)^\beta-1}{k^\beta}\Big)^{m+1}[(T^a_{qAB_{m+1}})^\beta+\cdots+(T^a_{qAB_{N-2}})^\beta]\\[2.0mm]
&\ \ \ \ +\Big(\frac{(1+k)^\beta-1}{k^\beta}\Big)^m(T^a_{qAB_{N-1}})^\beta,
\end{array}
\end{equation}
for $0\leq\beta\leq1$ with $1\leq q\leq2$ or $3\leq q\leq4$.
\end{theorem}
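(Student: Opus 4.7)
The plan is to adapt the strategy of Theorem~\ref{concurrence1} to the polygamy direction, using inequality~\eqref{Con5} of Lemma~\ref{con1} in place of \eqref{Con4}. First I would establish a single-step upper bound: starting from the tripartite TEoA polygamy inequality~\eqref{Tq7} in the form $T^a_{qA|BC}\leq T^a_{qAB}+T^a_{qAC}$ and raising both sides to the $\beta$-th power with $0\leq\beta\leq 1$, one has $(T^a_{qA|BC})^\beta\leq (T^a_{qAB}+T^a_{qAC})^\beta$. Assuming $T^a_{qAC}\leq k\,T^a_{qAB}$ with $T^a_{qAB}>0$, I would factor out $(T^a_{qAB})^\beta$, set $t=T^a_{qAC}/T^a_{qAB}\in[0,k]$, and invoke \eqref{Con5} with $n=\beta$ to obtain
\[
(T^a_{qA|BC})^\beta\leq (T^a_{qAB})^\beta+\frac{(1+k)^\beta-1}{k^\beta}(T^a_{qAC})^\beta,
\]
with the degenerate case $T^a_{qAB}=0$ forcing $T^a_{qAC}=0$ and giving the bound trivially.

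Next I would iterate this single-step estimate along the chain of bipartitions. Applied to the split isolating $B_1$, the hypothesis $kT^a_{qAB_1}\geq T^a_{qA|B_2\cdots B_{N-1}}$ plays the role of $T^a_{qAC}\leq kT^a_{qAB}$, producing
\[
(T^a_{qA|B_1\cdots B_{N-1}})^\beta\leq (T^a_{qAB_1})^\beta+\frac{(1+k)^\beta-1}{k^\beta}(T^a_{qA|B_2\cdots B_{N-1}})^\beta.
\]
Repeating $m$ times, with the $i$-th peeling relying on $kT^a_{qAB_i}\geq T^a_{qA|B_{i+1}\cdots B_{N-1}}$, accumulates geometric weights $\bigl(\frac{(1+k)^\beta-1}{k^\beta}\bigr)^{i-1}(T^a_{qAB_i})^\beta$ for $i=1,\ldots,m$ and leaves a residual $\bigl(\frac{(1+k)^\beta-1}{k^\beta}\bigr)^m(T^a_{qA|B_{m+1}\cdots B_{N-1}})^\beta$.

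For indices $j\geq m+1$ the reversed hypothesis $T^a_{qAB_j}\leq kT^a_{qA|B_{j+1}\cdots B_{N-1}}$ swaps which of the two summands is the smaller quantity, so the single-step estimate becomes
\[
(T^a_{qA|B_j\cdots B_{N-1}})^\beta\leq \frac{(1+k)^\beta-1}{k^\beta}(T^a_{qAB_j})^\beta+(T^a_{qA|B_{j+1}\cdots B_{N-1}})^\beta.
\]
Telescoping from $j=m+1$ down to $j=N-2$ bounds the residual by $\frac{(1+k)^\beta-1}{k^\beta}\sum_{j=m+1}^{N-2}(T^a_{qAB_j})^\beta+(T^a_{qAB_{N-1}})^\beta$, and multiplying through by the outstanding factor $\bigl(\frac{(1+k)^\beta-1}{k^\beta}\bigr)^m$ reassembles exactly the right-hand side of~\eqref{Tq8}. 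The main obstacle I anticipate is not analytical but bookkeeping: one must verify at each peeling step that the stated hypothesis really produces a ratio $t\in[0,k]$ so that Lemma~\ref{con1} is invoked in the correct direction, and that the exponents of $\frac{(1+k)^\beta-1}{k^\beta}$ accumulate exactly as claimed. The underlying Tsallis polygamy inequality~\eqref{Tq7} is already granted in the stated range $1\leq q\leq 2$ or $3\leq q\leq 4$, so no new convexity input is required beyond \eqref{Con5}.
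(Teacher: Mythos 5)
Your proposal is correct and matches the paper's (implicit) argument: the paper proves this theorem exactly by the analogue of Theorem~\ref{concurrence1}, replacing the monogamy input with the Tsallis polygamy inequality~\eqref{Tq7} applied one party at a time and using inequality~\eqref{Con5} of Lemma~\ref{con1} with $n=\beta\in[0,1]$ in place of~\eqref{Con4}. The bookkeeping of the weights $\bigl(\frac{(1+k)^\beta-1}{k^\beta}\bigr)^{i}$ you describe is exactly as in the paper.
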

\begin{theorem}
Suppose $k$ is a real number satisfying $0<k\leq 1$.
For any $N$-qubit mixed state $\rho_{AB_1\cdots B_{N-1}}$,
if $kT^a_{qAB_i}\geq T^a_{qA|B_{i+1}\cdots B_{N-1}}$ for all $i=1,2,\ldots,N-2$,
we have
\begin{equation}\label{Tq9}
\begin{array}{rl}
&T^\beta_{qA|B_1B_2\cdots B_{N-1}}\leq T^\beta_{qAB_1}+\Big(\frac{(1+k)^\beta-1}{k^\beta}\Big)T^\beta_{qAB_2}+\cdots\\[2.0mm]
& \ \ \ \ \ \ \ \ \ \ \ \ \ \ \ \ \ \ \ \ \ +\Big(\frac{(1+k)^\beta-1}{k^\beta}\Big)^{N-2}T^\beta_{qAB_{N-1}},
\end{array}
\end{equation}
for $0\leq\beta\leq1$ with $1\leq q\leq2$ or $3\leq q\leq4$.
\end{theorem}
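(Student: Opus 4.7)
The plan is to mirror the argument of Theorem~\ref{concurrence2} in its polygamy (reversed) form: in place of the monogamy base inequality $\mathcal{C}^2_{A|BC}\geq \mathcal{C}^2_{AB}+\mathcal{C}^2_{AC}$ I will use the three-party Tsallis polygamy inequality \eqref{Tq7}, and in place of the lower bound \eqref{Con4} of Lemma~\ref{con1} I will use its upper-bound companion \eqref{Con5}. Throughout I keep $q\in[1,2]\cup[3,4]$ so that \eqref{Tq7} is valid, and $0\leq\beta\leq 1$ so that \eqref{Con5} applies with exponent $n=\beta$.

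First I would establish a two-term auxiliary bound: if $kT^a_{qAB}\geq T^a_{qAC}$, then
\begin{equation*}
(T^a_{qA|BC})^{\beta}\leq (T^a_{qAB})^{\beta}+\frac{(1+k)^{\beta}-1}{k^{\beta}}(T^a_{qAC})^{\beta}.
\end{equation*}
This follows from the three-party case of \eqref{Tq7}, namely $T^a_{qA|BC}\leq T^a_{qAB}+T^a_{qAC}$, by raising both sides to the $\beta$-th power, factoring out $(T^a_{qAB})^{\beta}$, and applying \eqref{Con5} with $t=T^a_{qAC}/T^a_{qAB}\leq k$. The degenerate case $T^a_{qAB}=0$ forces $T^a_{qAC}=0$, so both sides vanish.

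The full theorem then follows by iterating this bound. Grouping $B_2\cdots B_{N-1}$ as a single party and using the hypothesis $kT^a_{qAB_1}\geq T^a_{qA|B_2\cdots B_{N-1}}$ gives
\begin{equation*}
(T^a_{qA|B_1\cdots B_{N-1}})^{\beta}\leq (T^a_{qAB_1})^{\beta}+\frac{(1+k)^{\beta}-1}{k^{\beta}}(T^a_{qA|B_2\cdots B_{N-1}})^{\beta}.
\end{equation*}
Repeating on $(T^a_{qA|B_2\cdots B_{N-1}})^{\beta}$ with the next hypothesis $kT^a_{qAB_2}\geq T^a_{qA|B_3\cdots B_{N-1}}$, and continuing down to $B_{N-1}$, compounds the coefficient $\frac{(1+k)^{\beta}-1}{k^{\beta}}$: it appears with the power $i-1$ in front of $(T^a_{qAB_i})^{\beta}$ and with the maximal power $N-2$ in front of $(T^a_{qAB_{N-1}})^{\beta}$, matching the stated inequality.

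The main obstacle is bookkeeping rather than any delicate analysis. At each stage \eqref{Tq7} must be invoked on the regrouped partition, and the running ratio $T^a_{qA|B_{i+1}\cdots B_{N-1}}/T^a_{qAB_i}$ must lie in $[0,k]$, which is precisely the uniform hypothesis of the theorem. Because all $N-2$ conditions point in the same direction, no bifurcation into two regimes (as in the mixed-condition polygamy theorem preceding this statement) is required, and the induction proceeds cleanly.
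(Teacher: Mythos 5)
Your proposal is correct and follows exactly the route the paper intends: the paper states this theorem without a separate proof, leaving it to the same argument as Theorem \ref{concurrence1}, namely iterating the Tsallis polygamy inequality \eqref{Tq7} on the nested partitions $A|B_i(B_{i+1}\cdots B_{N-1})$ together with the upper-bound inequality \eqref{Con5} of Lemma \ref{con1}, which is precisely your two-term auxiliary bound and its repeated application under the uniform conditions $kT^a_{qAB_i}\geq T^a_{qA|B_{i+1}\cdots B_{N-1}}$. Your reading of the quantities in \eqref{Tq9} as the assistance quantities $T^a_q$ (despite the missing superscript, evidently a typo) is also the intended one.
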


\subsection{Tighter monogamy and polygamy relations for R\'enyi-$\alpha$ entanglement}
For a bipartite pure state $|\psi\rangle_{AB}$, the R\'enyi-$\alpha$ entanglement is defined as \cite{vidal} $E(|\psi\rangle_{AB})=S_\alpha(\rho_A)$,
where $S_\alpha(\rho)=\frac{1}{1-\alpha}\log_2{\rm tr}\rho^\alpha$ for any $\alpha>0$
and $\alpha\neq1$, and $\lim\limits_{\alpha\rightarrow 1}S_\alpha(\rho)=S(\rho)=-{\rm tr}\rho\log_2\rho$.
For a bipartite mixed state $\rho_{AB}$, the R\'enyi-$\alpha$ entanglement is given by
\begin{equation}
E_\alpha(\rho_{AB})=\min\limits_{\{p_i,|\psi_i\rangle\}}\sum\limits_{i}p_iE_\alpha(|\psi_i\rangle),
\end{equation}
where the minimum is taken over all possible pure-state decompositions of $\rho_{AB}$.
For each $\alpha>0$, one has $E_\alpha(\rho_{AB})=f_\alpha(\mathcal{C}(\rho_{AB}))$,
where $f_\alpha(x)=\frac{1}{1-\alpha}\log\big[\big(\frac{1-\sqrt{1-x^2}}{2}\big)^2+(\frac{1+\sqrt{1-x^2}}{2}\big)^2\big]$
is a monotonically increasing and convex function \cite{JSK2}.
For $\alpha\geq2$ and any $n$-qubit state $\rho_{A|B_1B_2\cdots B_{N-1}}$, one has \cite{JSK3}
\begin{equation}\label{entropy1}
\begin{array}{rl}
&E_{\alpha A|B_1B_2\cdots B_{N-1}}\\[2mm]
&\ \ \geq E_{\alpha A|B_1}+E_{\alpha A|B_2}+\cdots+E_{\alpha A|B_{N-1}}.
\end{array}
\end{equation}

We propose the following two monogamy relations for the R\'enyi-$\alpha$ entanglement, which are tighter than the previous results.
\begin{theorem}
Suppose $k$ is a real number satisfying $0<k\leq 1$.
For an arbitrary $N$-qubit mixed state $\rho_{AB_1B_2\cdots B_{N-1}}$, if
$kE_{\alpha AB_i}\geq E_{\alpha A|B_{i+1}\cdots B_{N-1}}$ for $i=1,2,\ldots,m$, and
$E_{\alpha AB_j}\leq kT_{\alpha A|B_{j+1}\cdots B_{N-1}}$ for $j=m+1,\ldots,N-2$, $\forall 1\leq m\leq N-3$, $N\geq4$,
then
\begin{equation}
\begin{array}{rl}
&(E_{\alpha A|B_1B_2\cdots B_{N-1}})^{\beta}\\[2.0mm]
&\ \ \geq (E_{\alpha A|B_1})^{\beta}+\frac{(1+k)^\beta-1}{k^\beta}(E_{\alpha A|B_2})^{\beta}+\cdots\\[2.0mm]
&\ \ \ \ +\Big(\frac{(1+k)^\beta-1}{k^\beta}\Big)^{m-1}(E_{\alpha A|B_m})^{\beta}\\[2.0mm]
&\ \ \ \ +\Big(\frac{(1+k)^\beta-1}{k^\beta}\Big)^{m+1}[(E_{\alpha A|B_{m+1}})^{\beta}+\cdots+(E_{\alpha A|B_{N-2}})^{\beta}]\\[2.0mm]
&\ \ \ \ +\Big(\frac{(1+k)^\beta-1}{k^\beta}\Big)^m(E_{\alpha A|B_{N-1}})^{\beta},
\end{array}
\end{equation}
for $\beta\geq1$ and $\alpha\geq2$.
\end{theorem}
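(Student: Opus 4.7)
The plan is to mimic the proof strategy of Theorem \ref{concurrence1} and Theorem \ref{eof1}, replacing the quadratic Osborne--Verstraete-type inequality by the additive monogamy relation \eqref{entropy1} for R\'enyi-$\alpha$ entanglement, and powering up with Lemma \ref{con1} applied at exponent $\beta \geq 1$. The base inequality is already linear, so the exponent $\beta$ enters directly (no $\beta/2$), which matches the exponent $\beta$ appearing in the claimed weight $\frac{(1+k)^\beta-1}{k^\beta}$.

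First I would prove the two-term building block: for any reduced tripartite configuration $A|BC$ with $E_{\alpha AC}\leq k E_{\alpha AB}$ (and $E_{\alpha AB}>0$), apply \eqref{entropy1} for $\alpha\geq 2$ and raise to the power $\beta$,
\begin{equation*}
E_{\alpha A|BC}^\beta \geq \bigl(E_{\alpha AB}+E_{\alpha AC}\bigr)^\beta = E_{\alpha AB}^\beta\left(1+\frac{E_{\alpha AC}}{E_{\alpha AB}}\right)^\beta,
\end{equation*}
then invoke Lemma \ref{con1} with $t=E_{\alpha AC}/E_{\alpha AB}\leq k$ and exponent $\beta\geq 1$ to conclude
\begin{equation*}
E_{\alpha A|BC}^\beta \geq E_{\alpha AB}^\beta + \frac{(1+k)^\beta-1}{k^\beta}\,E_{\alpha AC}^\beta.
\end{equation*}
The degenerate case $E_{\alpha AB}=0$ forces $E_{\alpha AC}=0$ by the hypothesis, so the bound holds trivially there.

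Next I would iterate this building block exactly as in \eqref{Con10}. Under the first batch of hypotheses $kE_{\alpha AB_i}\geq E_{\alpha A|B_{i+1}\cdots B_{N-1}}$ for $i=1,\ldots,m$, each application strips off one qubit and multiplies the surviving tail by $\frac{(1+k)^\beta-1}{k^\beta}$; after $m$ rounds, $E_{\alpha AB_i}^\beta$ carries weight $\bigl(\frac{(1+k)^\beta-1}{k^\beta}\bigr)^{i-1}$ and the remaining tail $E_{\alpha A|B_{m+1}\cdots B_{N-1}}^\beta$ carries weight $\bigl(\frac{(1+k)^\beta-1}{k^\beta}\bigr)^m$. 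For the second batch $E_{\alpha AB_j}\leq k E_{\alpha A|B_{j+1}\cdots B_{N-1}}$ with $j=m+1,\ldots,N-2$, the roles of the two terms are swapped in the two-term bound, so now the factor $\frac{(1+k)^\beta-1}{k^\beta}$ multiplies $E_{\alpha AB_j}^\beta$ rather than the tail. Telescoping this reverse-oriented iteration produces the asymmetric exponents in \eqref{Con12}: $m-1$ on the last direct term $E_{\alpha AB_m}^\beta$, $m+1$ on the middle block $E_{\alpha AB_{m+1}}^\beta,\ldots,E_{\alpha AB_{N-2}}^\beta$, and $m$ on $E_{\alpha AB_{N-1}}^\beta$. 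Combining the two phases yields the claim.

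I expect the main obstacle to be purely bookkeeping: ensuring that the exponents on $\frac{(1+k)^\beta-1}{k^\beta}$ come out with the correct $m-1$, $m$, and $m+1$ placements after the two oppositely oriented iterations are spliced together. The analytic content is carried entirely by \eqref{entropy1} and Lemma \ref{con1}; the same induction pattern already validated in Theorems \ref{concurrence1} and \ref{eof1} transfers verbatim once one observes that the R\'enyi inequality \eqref{entropy1} is additive (not quadratic), so the natural exponent to apply Lemma \ref{con1} at is $\beta$ itself rather than $\beta/2$.
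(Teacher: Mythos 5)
Your proposal is correct and is exactly the argument the paper intends: the additive R\'enyi-$\alpha$ monogamy relation \eqref{entropy1} (applied with the tail $B_{i+1}\cdots B_{N-1}$ treated as one block), powered up via Lemma \ref{con1} at exponent $\beta\geq1$, and then iterated over the two batches of conditions just as in Theorems \ref{concurrence1} and \ref{eof1}; indeed the paper states this theorem without a separate proof precisely because it is this verbatim transfer. Your exponent bookkeeping ($m-1$ on $E_{\alpha A|B_m}$, $m+1$ on the middle block, $m$ on the last term) matches the pattern of \eqref{Con12}, so nothing is missing.
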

\begin{theorem}
Suppose $k$ is a real number satisfying $0<k\leq 1$.
For an arbitrary $N$-qubit mixed state $\rho_{AB_1B_2\cdots B_{N-1}}$,
if $kE_{\alpha AB_i}\geq E_{\alpha A|B_{i+1}\cdots B_{N-1}}$ for all $i=1,2,\ldots,N-2$, then
\begin{equation}
\begin{array}{rl}
&(E_{\alpha A|B_1\cdots B_{N-1}})^{\beta}\\[2.0mm]
&\ \ \geq (E_{\alpha AB_1})^{\beta}+\Big(\frac{(1+k)^{\alpha}-1}{k^{\alpha}}\Big)(E_{\alpha AB_2})^{\beta}+\cdots\\[2.0mm]
&\ \ \ \ +\Big(\frac{(1+k)^{\alpha}-1}{k^{\alpha}}\Big)^{N-2}(E_{\alpha AB_{N-1}})^{\beta}
\end{array}
\end{equation}
for $\beta\geq1$ and $\alpha\geq2$.
\end{theorem}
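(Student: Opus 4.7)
The plan is to mimic the iterative arguments used in Theorem \ref{concurrence2} and Theorem \ref{eof2}, substituting the three-party inequality for R\'enyi-$\alpha$ entanglement (the $N=3$ instance of \eqref{entropy1}) in place of the concurrence or EoF bound. No new analytic input is required: the statement sits on top of Lemma \ref{con1} and \eqref{entropy1} in exactly the same way that the earlier ``all smaller'' theorems do.

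First, I would establish a three-party base case. For any tripartition $A|BC$ with $kE_{\alpha AB}\geq E_{\alpha AC}$ and $\alpha\geq 2$, the relation \eqref{entropy1} specialized to three qubits gives $E_{\alpha A|BC}\geq E_{\alpha AB}+E_{\alpha AC}$. Raising to the $\beta$-th power, factoring out $E_{\alpha AB}^{\beta}$, and applying \eqref{Con4} of Lemma \ref{con1} with the nonnegative ratio $t=E_{\alpha AC}/E_{\alpha AB}\leq k$ and exponent $m=\beta\geq 1$, one obtains
\[
E_{\alpha A|BC}^{\beta}\geq E_{\alpha AB}^{\beta}+\frac{(1+k)^{\beta}-1}{k^{\beta}}\,E_{\alpha AC}^{\beta}.
\]
The degenerate case $E_{\alpha AB}=0$ forces $E_{\alpha AC}=0$ and the bound becomes the trivial $0\geq 0$.

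Next, I iterate this two-term inequality along the chain $B_1,B_2,\ldots,B_{N-1}$. At step $i$ I regroup the partition as $A\,|\,B_i\,(B_{i+1}\cdots B_{N-1})$, and use the hypothesis $kE_{\alpha AB_i}\geq E_{\alpha A|B_{i+1}\cdots B_{N-1}}$ to apply the base case. This peels off $E_{\alpha AB_i}^{\beta}$ and leaves the tail $E_{\alpha A|B_{i+1}\cdots B_{N-1}}^{\beta}$ multiplied by one additional factor of $\frac{(1+k)^{\beta}-1}{k^{\beta}}$. Running the recursion $N-2$ times gives the geometric coefficients $\bigl(\frac{(1+k)^{\beta}-1}{k^{\beta}}\bigr)^{j}$ displayed in the statement; this is the same telescoping as \eqref{Con10}.

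I do not expect a genuine obstacle: the hypotheses have been engineered precisely so that the base-case condition holds at every level of the telescoping, and the requirement $m\geq 1$ in Lemma \ref{con1} is exactly $\beta\geq 1$. The only mildly confusing point is cosmetic: the coefficient printed in the theorem statement is $\frac{(1+k)^{\alpha}-1}{k^{\alpha}}$, but the derivation via Lemma \ref{con1} naturally yields $\frac{(1+k)^{\beta}-1}{k^{\beta}}$, in agreement with the analogous theorems for concurrence, EoF, negativity, and Tsallis-$q$ entanglement; I would treat the ``$\alpha$'' in the displayed exponent as a typographical slip for ``$\beta$''.
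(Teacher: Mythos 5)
Your telescoping step is where the argument breaks. At stage $i$ you need the two-term inequality
\begin{equation*}
E_{\alpha A|B_iB_{i+1}\cdots B_{N-1}}\;\geq\;E_{\alpha AB_i}+E_{\alpha A|B_{i+1}\cdots B_{N-1}},
\end{equation*}
in which the tail $B_{i+1}\cdots B_{N-1}$ is treated as a single (multi-qubit) party. This is not ``the $N=3$ instance of \eqref{entropy1}'': relation \eqref{entropy1} only decomposes the left-hand side into \emph{single-qubit} terms $E_{\alpha A|B_j}$, and neither this paper nor the cited references provide the grouped version for $E_\alpha$. The known derivation of \eqref{entropy1} goes through $E_{\alpha A|B_1\cdots B_{N-1}}\geq f_\alpha(\mathcal{C}_{A|B_1\cdots B_{N-1}})$, the superadditivity $f_\alpha(\sqrt{x^2+y^2})\geq f_\alpha(x)+f_\alpha(y)$ for $\alpha\geq2$, and the equality $E_\alpha=f_\alpha(\mathcal{C})$ valid only for two-qubit marginals; for a multi-qubit tail one obtains only $f_\alpha(\mathcal{C}_{A|B_{i+1}\cdots B_{N-1}})$ on the right, and the available bound $E_{\alpha A|B_{i+1}\cdots B_{N-1}}\geq f_\alpha(\mathcal{C}_{A|B_{i+1}\cdots B_{N-1}})$ points in the wrong direction to upgrade it to $E_{\alpha A|B_{i+1}\cdots B_{N-1}}$. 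This is exactly the obstruction that forces the paper's EoF proof (Theorem \ref{eof1}) to run the whole recursion at the level of $f(\mathcal{C}^2)$ and convert to $E$ only at the last step; for concurrence the grouped step is legitimate because the $2\otimes2\otimes2^{n-2}$ inequality of \cite{TJO,XJR} is available, but no analogue for $E_\alpha$ is established here. Your three-single-qubit base case and the use of \eqref{Con4} with exponent $\beta\geq1$ are fine; the unproven grouped monogamy of $E_\alpha$ is the gap.

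The theorem can be proved with only the ingredients in the paper, in either of two ways. (i) Mirror Theorem \ref{eof1}: start from $E_{\alpha A|B_1\cdots B_{N-1}}\geq f_\alpha(\mathcal{C}_{A|B_1\cdots B_{N-1}})$, iterate on $f_\alpha$ using $\mathcal{C}^2_{A|B_i\cdots B_{N-1}}\geq\mathcal{C}^2_{AB_i}+\mathcal{C}^2_{A|B_{i+1}\cdots B_{N-1}}$, superadditivity of $f_\alpha$ and Lemma \ref{con1}, noting that the hypotheses transfer since $E_{\alpha AB_i}=f_\alpha(\mathcal{C}_{AB_i})$ while $E_{\alpha A|B_{i+1}\cdots B_{N-1}}\geq f_\alpha(\mathcal{C}_{A|B_{i+1}\cdots B_{N-1}})$, so $kE_{\alpha AB_i}\geq E_{\alpha A|B_{i+1}\cdots B_{N-1}}$ implies $kf_\alpha(\mathcal{C}_{AB_i})\geq f_\alpha(\mathcal{C}_{A|B_{i+1}\cdots B_{N-1}})$; finally convert back to $E_\alpha$ on the two-qubit marginals. (ii) Avoid grouping altogether: apply \eqref{entropy1} once to the full state to get $E_{\alpha A|B_1\cdots B_{N-1}}\geq\sum_{j}E_{\alpha AB_j}$, and apply it to each reduced state $\rho_{AB_{i+1}\cdots B_{N-1}}$ to turn the hypotheses into the scalar bounds $\sum_{j>i}E_{\alpha AB_j}\leq E_{\alpha A|B_{i+1}\cdots B_{N-1}}\leq kE_{\alpha AB_i}$; then iterate Lemma \ref{con1} on the plain sum $\sum_j E_{\alpha AB_j}$, which produces exactly the coefficients $\big(\frac{(1+k)^{\beta}-1}{k^{\beta}}\big)^{j-1}$. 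Your observation that the exponent $\alpha$ in the printed coefficient should read $\beta$ is correct.
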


\smallskip
\noindent{\bf Example 5} \, \ Consider again the state given in Example 1 with
$\lambda_0=\lambda_3=\frac{1}{2},\ \lambda_2=\frac{\sqrt{2}}{2}$ and $\lambda_1=\lambda_4=0$.
For $\alpha=2$, we find $E_{2A|BC}=\log_2\frac{8}{5}\approx678072$, $E_{2AB}=\log_2\frac{8}{7}\approx0.415037$
and $E_{2AC}=\log_2\frac{4}{3}\approx0.192645$.
Then $E_{2AB}^\alpha+E_{2AC}^\alpha=0.415037^\alpha+0.192645^\alpha$ and
$E_{2AB}^{\alpha}+\frac{(1+k)^\alpha-1}{k^{\alpha}}E_{2AC}^{\alpha}=0.415037^{\alpha}+\frac{(1+k)^\alpha-1}{k^{\alpha}}0.192645^\alpha$.
One can see that our result is better than the result in \cite{JSK3}, and the smaller $k$ is, the tighter relation is, see Fig. 5.
\begin{figure}
  \centering
  \includegraphics[width=6cm]{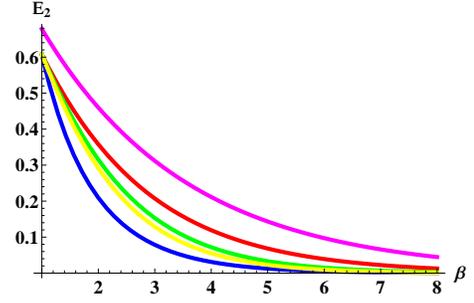}
  \caption{The $y$ axis is the lower bound of the R\'enyi entropy entanglement  $E^\beta_{2}(|\psi\rangle_{A|BC})$.
  The purple line represents the value of $E^\beta_2(|\psi\rangle_{A|BC})$,
  the red (green resp. yellow) line represents the lower bound from our result for $k=0.5$ ($k=0.7$ resp. $k=0.9$),
  and the blue line represents the lower bound from the result (\ref{entropy1}) in \cite{JSK3}.}
\end{figure}

The R\'enyi-$\alpha$ entanglement of assistance (REoA),
a dual quantity to R\'enyi-$\alpha$ entanglement, is defined as
$E_\alpha^a(\rho_{AB})=\max\limits_{\{p_i,|\psi_i\rangle\}}\sum\limits_{i}p_iE_\alpha(|\psi_i\rangle)$,
where the maximum is taken over all possible pure state decompositions of $\rho_{AB}$.
For $\alpha\in[\frac{\sqrt{7}-1}{2},\frac{\sqrt{13}-1}{2}]$ and any $n$-qubit state $\rho_{AB_1B_2\cdots B_{N-1}}$,
a polygamy relation of multi-partite quantum entanglement in terms of REoA has been given by \cite{Song}:
\begin{equation}
\begin{array}{rl}
&E^a_{\alpha A|B_1B_2\cdots B_{N-1}}\\[2mm]
&\ \ \leq E^a_{\alpha A|B_1}+E^a_{\alpha A|B_2}+\cdots +E^a_{\alpha A|B_{N-1}}.
\end{array}
\end{equation}
We improve this inequality to be a tighter ones under some netural conditions.
\begin{theorem}
Suppose $k$ is a real number satisfying $0<k\leq 1$.
For an arbitrary $N$-qubit mixed state $\rho_{AB_1B_2\cdots B_{N-1}}$, if
$kE^a_{\alpha AB_i}\geq E^a_{\alpha A|B_{i+1}\cdots B_{N-1}}$ for $i=1,2,\ldots,m$, and
$E^a_{\alpha AB_j}\leq kE^a_{\alpha A|B_{j+1}\cdots B_{N-1}}$ for $j=m+1,\ldots,N-2$, $\forall 1\leq m\leq N-3$, $N\geq4$,
then
\begin{equation}
\begin{array}{rl}
&(E^a_{\alpha A|B_1B_2\cdots B_{N-1}})^{\beta}\\[2.0mm]
&\ \ \leq (E^a_{\alpha A|B_1})^{\beta}+\frac{(1+k)^\beta-1}{k^\beta}(E^a_{\alpha A|B_2})^{\beta}+\cdots\\[2.0mm]
&\ \ \ \ +\Big(\frac{(1+k)^\beta-1}{k^\beta}\Big)^{m-1}(E^a_{\alpha A|B_m})^{\beta}\\[2.0mm]
&\ \ \ \ +\Big(\frac{(1+k)^\beta-1}{k^\beta}\Big)^{m+1}\Big[(E^a_{\alpha A|B_{m+1}})^{\beta}+\cdots+(E^a_{\alpha A|B_{N-2}})^{\beta}\Big]\\[2.0mm]
&\ \ \ \ +\Big(\frac{(1+k)^\beta-1}{k^\beta}\Big)^m(E^a_{\alpha A|B_{N-1}})^{\beta},
\end{array}
\end{equation}
for $0\leq\beta\leq1$ with $\frac{\sqrt{7}-1}{2}\leq\alpha\leq\frac{\sqrt{13}-1}{2}$.
\end{theorem}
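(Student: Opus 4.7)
The plan is to mimic the strategy already used for the concurrence of assistance (Theorem 4) and the EoA (Theorem 7), adapted to R\'enyi-$\alpha$ entanglement of assistance. The core ingredient will be a tripartite building block lemma of the form: if $kE^a_{\alpha AB}\geq E^a_{\alpha AC}$ (so $E^a_{\alpha AC}/E^a_{\alpha AB}\leq k$), then for $0\leq\beta\leq1$,
\begin{equation*}
(E^a_{\alpha A|BC})^\beta \leq (E^a_{\alpha AB})^\beta + \frac{(1+k)^\beta-1}{k^\beta}\,(E^a_{\alpha AC})^\beta .
\end{equation*}
To prove this, I would start from the known tripartite REoA polygamy inequality $E^a_{\alpha A|BC}\leq E^a_{\alpha AB}+E^a_{\alpha AC}$ (from \cite{Song}, valid in the stated range of $\alpha$), raise both sides to the power $\beta\in[0,1]$, factor out $(E^a_{\alpha AB})^\beta$, and invoke inequality \eqref{Con5} of Lemma \ref{con1} with $t=E^a_{\alpha AC}/E^a_{\alpha AB}\in[0,k]$. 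The degenerate case $E^a_{\alpha AB}=0$ forces $E^a_{\alpha AC}=0$ (by non-negativity and monotonicity), giving a trivial bound.

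Having the building block, I would iterate exactly as in the proof of Theorem \ref{concurrence1}. First, using $kE^a_{\alpha AB_i}\geq E^a_{\alpha A|B_{i+1}\cdots B_{N-1}}$ for $i=1,\ldots,m$, successive applications of the building block (with $B=B_i$, $C=B_{i+1}\cdots B_{N-1}$) yield
\begin{equation*}
(E^a_{\alpha A|B_1\cdots B_{N-1}})^\beta \leq \sum_{i=1}^{m}\Bigl(\tfrac{(1+k)^\beta-1}{k^\beta}\Bigr)^{i-1}(E^a_{\alpha AB_i})^\beta + \Bigl(\tfrac{(1+k)^\beta-1}{k^\beta}\Bigr)^{m}(E^a_{\alpha A|B_{m+1}\cdots B_{N-1}})^\beta .
\end{equation*}
For the remaining tail, since now the inequalities flip direction ($E^a_{\alpha AB_j}\leq kE^a_{\alpha A|B_{j+1}\cdots B_{N-1}}$ for $j=m+1,\ldots,N-2$), I would apply the building block in the reversed role: factor out the larger term $(E^a_{\alpha A|B_{j+1}\cdots B_{N-1}})^\beta$ rather than $(E^a_{\alpha AB_j})^\beta$. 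This produces a coefficient $\tfrac{(1+k)^\beta-1}{k^\beta}$ on each $(E^a_{\alpha AB_j})^\beta$ and leaves $(E^a_{\alpha AB_{N-1}})^\beta$ with coefficient $1$ at the end of the chain. Combining the two iterations gives the claimed bound, where the extra power $\bigl(\tfrac{(1+k)^\beta-1}{k^\beta}\bigr)^{m+1}$ on the middle block and $\bigl(\tfrac{(1+k)^\beta-1}{k^\beta}\bigr)^{m}$ on the last term arise precisely from the $m$-fold first iteration times either one extra factor (for $B_{m+1},\ldots,B_{N-2}$) or no extra factor (for $B_{N-1}$).

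The main obstacle is the correct bookkeeping in the two-phase iteration rather than any analytic difficulty: one must confirm that Lemma \ref{con1} is applied with $t\in[0,k]$ at every step, which is exactly what the hypotheses $kE^a_{\alpha AB_i}\geq E^a_{\alpha A|B_{i+1}\cdots B_{N-1}}$ and $E^a_{\alpha AB_j}\leq kE^a_{\alpha A|B_{j+1}\cdots B_{N-1}}$ guarantee, and that the auxiliary tripartite REoA polygamy inequality used at each reduction step is legitimate when the ``$C$'' block is itself a composite multi-qubit system (this follows from \cite{Song} by grouping $B_{i+1}\cdots B_{N-1}$ as a single subsystem). Once these points are verified, the rest is an algebraic rearrangement identical in structure to the proofs already carried out in the paper.
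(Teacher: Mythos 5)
Your proposal is correct and follows exactly the route the paper intends for this theorem (which it leaves implicit, stating only that the polygamy cases are proved "similar to the proof of Theorem \ref{concurrence1} by using inequality \eqref{Con5}"): start from the REoA polygamy relation of \cite{Song}, apply Lemma \ref{con1}'s inequality \eqref{Con5} with $t\leq k$ at each step, and iterate in two phases with the bookkeeping you describe. The composite-subsystem caveat you flag is handled (or rather, assumed) at the same level of rigor in the paper's own iterations, so your argument matches the paper's proof in both substance and structure.
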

\begin{theorem}
Suppose $k$ is a real number satisfying $0<k\leq 1$.
For an arbitrary $N$-qubit mixed state $\rho_{AB_1B_2\cdots B_{N-1}}$,
if $kE^a_{\alpha AB_i}\geq E^a_{\alpha A|B_{i+1}\cdots B_{N-1}}$ for all $i=1,2,\ldots,N-2$, then
\begin{equation}
\begin{array}{rl}
&(E^a_{\alpha  A|B_1\cdots B_{N-1}})^{\beta}\\[2.0mm]
&\ \ \leq (E^a_{\alpha AB_1})^{\beta}+\Big(\frac{(1+k)^{\alpha}-1}{k^{\alpha}}\Big)(E^a_{\alpha AB_2})^{\beta}+\cdots\\[2.0mm]
&\ \ \ \ +\Big(\frac{(1+k)^{\alpha}-1}{k^{\alpha}}\Big)^{N-2}(E^a_{\alpha AB_{N-1}})^{\beta}
\end{array}
\end{equation}
for $0\leq\beta\leq1$, with $\frac{\sqrt{7}-1}{2}\leq\alpha\leq\frac{\sqrt{13}-1}{2}$.
\end{theorem}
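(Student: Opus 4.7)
The plan is to mirror the structure of the earlier polygamy proofs (for CoA, EoA, TEoA), using as input (i)~the base REoA polygamy inequality stated immediately before the theorem, namely $E^a_{\alpha A|B_1\cdots B_{N-1}}\leq\sum_{i=1}^{N-1}E^a_{\alpha A|B_i}$ valid for $\frac{\sqrt{7}-1}{2}\leq\alpha\leq\frac{\sqrt{13}-1}{2}$, and (ii)~Lemma~\ref{con1} inequality \eqref{Con5}, which gives the key scalar bound $(1+t)^\beta\leq 1+\frac{(1+k)^\beta-1}{k^\beta}t^\beta$ for $0\leq t\leq k$ and $0\leq\beta\leq 1$. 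Note that the exponents $\alpha$ in the stated bound should read $\beta$, matching the earlier polygamy theorems and the exponent produced by Lemma~\ref{con1}; I will prove the $\beta$-version.

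The first step is a single-layer reduction. By applying the base polygamy inequality to the grouping that treats $B_{i+1}\cdots B_{N-1}$ as one block, I obtain
\begin{equation*}
E^a_{\alpha A|B_i B_{i+1}\cdots B_{N-1}}\leq E^a_{\alpha AB_i}+E^a_{\alpha A|B_{i+1}\cdots B_{N-1}}.
\end{equation*}
Set $a=E^a_{\alpha AB_i}$ and $b=E^a_{\alpha A|B_{i+1}\cdots B_{N-1}}$. The hypothesis $kE^a_{\alpha AB_i}\geq E^a_{\alpha A|B_{i+1}\cdots B_{N-1}}$ gives $b\leq ka$; if $a=0$ the claim is trivial, otherwise put $t=b/a\in[0,k]$. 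Raising to the $\beta$-th power ($0\leq\beta\leq 1$) and invoking \eqref{Con5} with this $t$ yields
\begin{equation*}
(E^a_{\alpha A|B_i\cdots B_{N-1}})^\beta\leq (E^a_{\alpha AB_i})^\beta+\frac{(1+k)^\beta-1}{k^\beta}(E^a_{\alpha A|B_{i+1}\cdots B_{N-1}})^\beta.
\end{equation*}

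The second step is to iterate this reduction for $i=1,2,\ldots,N-2$. At each stage the hypothesis $kE^a_{\alpha AB_i}\geq E^a_{\alpha A|B_{i+1}\cdots B_{N-1}}$ is exactly what is needed to apply the single-step bound. Telescoping the recursion produces
\begin{equation*}
(E^a_{\alpha A|B_1\cdots B_{N-1}})^\beta\leq\sum_{i=1}^{N-2}\Big(\tfrac{(1+k)^\beta-1}{k^\beta}\Big)^{i-1}(E^a_{\alpha AB_i})^\beta+\Big(\tfrac{(1+k)^\beta-1}{k^\beta}\Big)^{N-2}(E^a_{\alpha AB_{N-1}})^\beta,
\end{equation*}
which is precisely the claimed inequality (with the corrected exponent $\beta$).

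The argument is essentially mechanical given Lemma~\ref{con1} and the base polygamy. The only delicate point — and hence the main obstacle — is the bipartite coarsening used in the single-step reduction: one must justify that treating $B_{i+1}\cdots B_{N-1}$ as one subsystem and still bounding $E^a_{\alpha A|B_i B_{i+1}\cdots B_{N-1}}\leq E^a_{\alpha AB_i}+E^a_{\alpha A|B_{i+1}\cdots B_{N-1}}$ is legitimate. This is handled exactly as in the analogous CoA, EoA and TEoA polygamy proofs earlier in the paper: the multi-qubit polygamy inequality, applied recursively, supplies the two-term split, and the condition on $\alpha$ is preserved at every grouping since it controls the convexity/monotonicity of $f_\alpha$ rather than the partition structure.
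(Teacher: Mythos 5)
Your proof is correct and follows essentially the same route the paper intends for this theorem: iterate the two-term REoA polygamy split $E^a_{\alpha A|B_iB_{i+1}\cdots B_{N-1}}\leq E^a_{\alpha AB_i}+E^a_{\alpha A|B_{i+1}\cdots B_{N-1}}$ and, under the hypothesis $kE^a_{\alpha AB_i}\geq E^a_{\alpha A|B_{i+1}\cdots B_{N-1}}$, apply inequality \eqref{Con5} of Lemma~\ref{con1} at each step, exactly as in the CoA, EoA and TEoA polygamy theorems. Your remark that the exponent $\alpha$ in the printed coefficient should be $\beta$ is also consistent with the companion theorems and with what the lemma actually yields.
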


\section{Conclusion}
Both entanglement monogamy and polygamy are fundamental properties of multipartite entangled states.
We have presented monogamy relations related to the $\beta$th power of concurrence, entanglement of formation,
negativity, Tsallis-$q$ and R\'enyi-$\alpha$ entanglement.
We also provide polygamy relations related to these entanglement measures.
All the relations we presented in this paper are tighter than the previous results.
These tighter monogamy and polygamy inequalities can also provide finer characterizations of the entanglement distributions
among the multiqubit systems.
Our results provide a rich reference for future work on the study of multiparty quantum entanglement.
And our approaches are also useful for further study on the monogamy and polygamy properties related to measures of other quantum correlations and quantum coherence \cite{framework,coherence}.

\section{Acknowledgements}
This work is supported by the National Natural Science Foundation of China under Grant Nos. 11805143 and 11675113, and Key Project of Beijing Municipal Commission of Education under No. KZ201810028042.

\end{document}